\renewcommand{\baselinestretch}{1.7}
\def\diag{{\rm diag}\,}
\def\Exp{{\mathbb{E}}\,}
\def\tr{{\rm tr}\,}
\def\diag{{\rm diag}\,}
\def\real{{\mathfrak{R} }\,}
\def\imag{{\mathfrak{I}}\,}
\def\be{\begin{equation}}
\def\ee{\end{equation}}
\def\ba{\left[\begin{array}}
\def\ea{\end{array}\right]}
\def\bea{\begin{eqnarray}}
\def\eea{\end{eqnarray}}
\newcommand{\mb}[1]{\mathbf{#1}}
\newcommand{\mr}[1]{\mathrm{#1}}
\newcommand{\mc}[1]{\mathcal{#1}}
\def\ba{{\bf a}}
\def\t{\mr{T}}
\newtheorem{theorem}{\textbf{Theorem}}
\newtheorem{proposition}{\textbf{Proposition}}
\newtheorem{definition}{\textbf{Definition}}
\begin{document}
\title{Multi-Antenna System Design with Bright Transmitters and Blind
Receivers}
%
\date{}
\renewcommand{\baselinestretch}{1.7}
\IEEEoverridecommandlockouts
\author{Liangbin Li, Hamid Jafarkhani\thanks{This work was supported in part by the NSF Award
CCF-0963925. Part of this work was presented at IEEE Globecom 2011.}
\\Center for
Pervasive Communications \& Computing, University of California,
Irvine} \maketitle

\vspace{-10mm}
\begin{abstract}
This paper considers a scenario for multi-input multi-output (MIMO)
communication systems when perfect channel state information at the
transmitter (CSIT) is given while the equivalent channel state
information at the receiver (CSIR) is not available. Such an
assumption is valid for the downlink multi-user MIMO systems with
linear precoders that depend on channels to all receivers. We
propose a concept called \emph{dual systems with zero-forcing
designs} based on the duality principle, originally proposed to
relate Gaussian multi-access channels (MACs) and Gaussian broadcast
channels (BCs). For the two-user $N\times 2$ MIMO BC with $N$
antennas at the transmitter and two antennas at each of the
receivers, we design a \emph{downlink interference cancellation (IC)
transmission scheme} using the dual of uplink MAC systems employing
IC methods. The transmitter simultaneously sends two precoded
Alamouti codes, one for each user. Each receiver can zero-force the
unintended user's Alamouti codes and decouple its own data streams
using two simple linear operations independent of CSIR. Analysis
shows that the proposed scheme achieves a diversity gain of $2(N-1)$
for equal energy constellations with short-term power and rate
constraints. Power allocation between two users can also be
performed, and it improves the array gain but not the diversity
gain. Numerical results demonstrate that the bit error rate of the
downlink IC scheme has a substantial gain compared to the block
diagonalization method, which requires global channel
information at each node. 
\end{abstract}

{\bf\em Index Terms:} Multi-antenna systems, broadcast channels,
duality, block diagonalization, interference cancellation,
space-time coding, orthogonal designs.

\renewcommand{\baselinestretch}{2}

\section{Introduction}
System performance can be improved through learning the fading
coefficients of wireless channels. A pilot sequence is
inserted at the beginning of each data stream to help the
receiver estimate the unknown fading coefficients. Then,
techniques such as receive beamforming or coherent detection
can be conducted to exploit the known channel state
information at the receiver (CSIR) before it is outdated.
Channel state information at the transmitter (CSIT) can also
be obtained through feedback channels if the channel coherent
interval is longer than the feedback delay. Also, when the
system is operating in the time division duplex (TDD) mode,
the forward and the reverse channel coefficients are
approximately the same due to reciprocity\cite{MaHo06}. Then,
the CSIR obtained at the reverse channel is used as the CSIT
for the forward channel. Techniques such as rate adaption,
power allocation, and transmit beamforming can be performed at
the transmitter using the knowledge of CSIT. With respect to
the assumptions on the channel information, communication
systems can be classified into four categories as listed in
Table~\ref{table}. The first three categories have been
extensively discussed in \cite{hj}, while to the best of our
knowledge, communication systems with CSIT and no CSIR have
been ignored. Generally, obtaining channel information at the
receiver is easier than obtaining it at the transmitter. The
use of System D, although not seemingly natural, can be
illustrated for transmission in broadcast channels (BCs).


In a multi-user multi-input multi-output (MIMO) BC, the transmitter
simultaneously sends multiple independent data streams for each
user. The information theoretical aspects of the BC, e.g., the sum
capacity or the capacity region of the vector MIMO Gaussian BC, have
received much attention\cite{ViTse03, ViJinGo03,WeiStSh06}. The
approach to achieve the capacity is through exploiting CSIT using a
nonlinear precoding method called dirty paper coding
(DPC)\cite{Costa83}. The knowledge of the codewords of the
previously encoded data streams is used to encode a new data stream.
For practical systems, such nonlinear precoders are discouraged
because of its high complexity. A class of linear precoders, known
as the zero-forcing (ZF) precoder or the block diagonalization (BD)
method, is proposed to reduce the complexity\cite{SpSwHa04}. It is
also known that the low-complexity linear alternative achieves the
maximum multiplexing gain\cite{ShCh07}. Data streams of each user
are independently encoded and a ZF precoder is designed for each
data stream to null out its interference at unintended receivers.
The design of ZF precoders is later generalized in \cite{Sung09}
using minimum mean square error criterions. One limitation of the ZF
precoding methods is the requirement of perfect CSIT. Also, the
number of allowable users is constrained by the available antennas
at the base station. For a cell with a large number of receivers,
multi-user scheduling and quantized feedback design are discussed in
\cite{Yoo07,Kh11}. The achievable sum-capacity scales linearly with
$\log \log K$, where $K$ denotes the number of receivers.

There is another practical limitation of the ZF precoders: its
implementation incurs substantial exchange of channel
information. The transmitter can learn the global CSIT through
the feedback channels or reciprocal channels, and design the
ZF precoders. Note that each user's precoder also depends on
the channels to the other users and each receiver does not
know the other users' channels. The equivalent channels, that
consist of both the precoders and the fading channels, are
still unknown at each receiver. Another round of training to
learn the equivalent channels is necessary for coherent
detection. Otherwise, the ZF precoders decouple the BC into
parallel non-interfering systems with CSIT and no CSIR. A
transmission scheme for System D can be applied directly for
each system without further training. This motivates our work
to study transmission with CSIT and no CSIR.

Most previous designs of ZF precoders are focused on maximizing the
sum-capacity or the multiplexing gain, which needs infinite
dimensions for signaling. From the reliability perspective, the
diversity gain using finite signaling dimensions is of equal
importance. A comparison between the opportunistic scheduling and
the BD transmission using space-time block codes (STBCs) is made in
\cite{Lee10}. For a $K$-user $N\times M$ MIMO BC where the
transmitter has $N$ antennas and each of the $K$ receivers has $M$
antennas, allowing only the user with the strongest channel gains to
transmit can achieve the maximum diversity gain of $KMN$. For the BD
transmission concatenated with STBCs, the achievable diversity gain
is $(N-(K-1)M)M$, which is far less than the maximum possible
diversity gain. In this paper, we also aim at improving the
achievable diversity gains for ZF precoders.

To achieve these goals, we adopt the idea of duality, that was
originally proposed to relate the Gaussian multi-access
channel (MAC) and the Gaussian BC with perfect channel
information at all nodes \cite{ViTse03,ViJinGo03}. A new
duality principle is proposed to connect two linear systems
with ZF designs. We apply this principle to transform known
transmission schemes in original systems into new schemes for
their dual systems. We use an uplink MAC system performing
interference cancellation (IC) techniques \cite{NaSeCa,AlCa}
as the original system, and propose downlink IC schemes for
BCs. The contribution of this paper is summarized as follows:
\begin{enumerate}
  \item We propose a duality principle to connect two
real systems with linear ZF designs. The transmit (receive) filters
in the first system are used as the receive (transmit) filters in
the second system. When both systems have the same input power
distribution and the channel matrix of the first system is transpose
of that of the second system, the instantaneous signal-to-noise
ratios (SNRs) at the outputs of the receive filters are also
identical for both systems. We illustrate this principle by
constructing the dual of Alamouti systems, and obtain a new
beamforming method called \emph{dual Alamouti codes}.
  \item For the two-user $N\times 2$ MIMO BC, we
  propose \emph{a downlink IC scheme} that concurrently sends precoded Alamouti codes for each user, and cancels interference and decodes messages blindly at each receiver.
  It achieves a diversity gain of $2(N-1)$, rate-one
  per user, and symbol-by-symbol decoding complexity. Compared to the
  linear BD methods, although it does not achieve the maximum rate for $N>2$, the proposed scheme requires less number of transmit antennas, does not need perfect global CSIR at
  each receiver, and achieves higher diversity gain that improves the bit error rate (BER) performance. Also, the proposed scheme shows superiority in terms of diversity gain compared to the BD method even if it is concatenated with a STBC\cite{ChAnHe04,Lee10}. 
\end{enumerate}

The rest of the paper is organized as follows. In Section
\ref{sec-duality}, we present our definition of duality, and
construct the dual Alamouti codes from the original Alamouti
codes for point-to-point MIMO systems. In Section
\ref{sec-downlinkIC}, we propose the downlink IC scheme for
the two-user $N\times 2$ MIMO BC, and analyze its diversity
gain performance. Simulation results are provided in Section
\ref{sec-numer}. Finally, concluding remarks are given in
Section \ref{sec-conclusion} and involved proofs are included
in the appendices.

\textit{Notation}: For matrix $\mb{A}$, let $\mb{A}^\t$, $\mb{A}^*$,
and $\|\mb{A}\|$ denote its transpose, Hermitian, and Frobenius
norm, respectively. Denote $\mb{0}_{m, n}$, $\mb{0}_m$, and
$\mb{I}_m$ as an $m\times n$ all-zero matrix, an $m\times m$
all-zero matrix, and an $m\times m$ identity matrix, respectively.
The Kronecker product is denoted as $\otimes$. We define
$\mc{N}(0,1)$ and $\mc{CN}(0,1)$ as real Gaussian distribution and
circularly symmetrical complex Gaussian distribution with zero mean
and unit variance, respectively. We also use $\real a$ and $\imag a$
to denote the real and imaginary components of a complex variable
$a$, respectively. The sets of real and complex numbers are denoted
as $\mathds{R}$ and $\mathds{C}$, respectively. Similarly, the sets
of $N\times M$ real matrix and complex matrix are denoted as
$\mathds{R}^{N\times M}$ and $\mathds{C}^{N\times M}$, respectively.

\section{Duality for ZF Designs}\label{sec-duality}
We design systems with CSIT and no CSIR from known systems with CSIR
and no CSIT. These two systems are connected through duality where
the roles of transmit and receive filters are exchanged. In this
section, we present a new duality principle for linear systems with
ZF designs. Subsection \ref{subsec-principle} discusses the duality
principle, and Subsection \ref{subsec-Alamouti} provides an example
to construct the dual of the Alamouti systems.

\subsection{The Duality Principle}\label{subsec-principle}
We consider two real systems as illustrated in Fig.~\ref{fig-dual}.
System 1 has an input vector $\mb{s}\in \mathds{R}^{1\times n}$ and
an output vector $\mb{d} \in \mathds{R}^{1\times m}$ with
input-output relationship
 \begin{align}\label{eq-original11}\mb{d}=\mb{s}\mb{Z}+\mb{n},\end{align}
where $\mb{Z}\in \mathds{R}^{n\times m}$ denotes the
equivalent channel matrix, and $\mb{n}\in \mathds{R}^{1\times
m}$ denotes the i.~i.~d. $\mc{N}(0,1)$ additive white Gaussian
noise (AWGN) vector. System 2 has an input vector $\mb{x}\in
\mathds{R}^{1\times m}$ and an output vector $\mb{r}\in
\mathds{R}^{1\times n}$ with the input-output relationship
\be\mb{r}=\mb{x}\mb{F}+\mb{w},\ee where $\mb{F}\in
\mathds{R}^{m\times n}$ and $\mb{w}\in \mathds{R}^{1\times n}$
denote the equivalent channel matrix and the i.~i.~d.
$\mc{N}(0,1)$ noise vector, respectively.

Multiple information streams $c_l, (l=1,\ldots,J)$ are
transmitted simultaneously through these two systems using
transmit and receive filters. For System 1, Stream $c_l$ is
sent using $\mb{u}_l\in \mathds{R}^{1\times n}$ as the
transmit filter and $\mb{v}_l\in \mathds{R}^{1\times m}$ as
the receive filter. The filters are normalized as
{\small$\|\mb{u}_l\|^2=\|\mb{v}_l\|^2=1$.} The input vector
$\mb{s}$ is generated through linear superposition of all
streams as
{\small$\mb{s}=\underset{l=1:J}{\sum}\sqrt{\mc{P}_l}\mb{u}_lc_l$},
where the coefficient $\mc{P}_l$ denotes power allocation
profile for Stream $c_l$ and satisfies the power constraint
$\underset{l=1:J}{\sum}\mc{P}_l=\mc{P}$. The receiver
calculates $\hat{y}_k=\mb{d}\mb{v}_k^\t$ to extract Stream
$c_k$. From \eqref{eq-original11}, the equivalent system at
the output of $\mb{v}_k$ can be expressed as
\be\label{eq-original1}
\hat{y}_k=\left(\sum_{l=1:J}\sqrt{\mc{P}_l}\mb{u}_lc_l\right)\mb{Z}\mb{v}_k^\t+\mb{n}\mb{v}_k^\t.\ee
The filters are designed based on ZF constraints. In other
words, the output of $\mb{v}_k$ contains no component of the
stream $\mb{c}_l$ for $l\neq k$. Mathematically, we have
\be\label{eq-ZF}\mb{u}_l\mb{Z}\mb{v}_k^\t=0, l\neq k.\ee Thus,
the equivalent system in \eqref{eq-original1} can be
simplified as
$\hat{y}_k=\sqrt{\mc{P}_k}c_k\mb{u}_k\mb{Z}\mb{v}_k^\t+\mb{n}\mb{v}_k^\t.$
The SNR at the output of $\mb{v}_k$ can be expressed as
\be\label{eq-SNRorig}\mr{SNR}_{\mr{orig},k}=\mc{P}_k\|\mb{u}_k\mb{Z}\mb{v}_k^\t\|^2.\ee

For System 2, Stream $c_l$ is sent using the transmit filter
$\mb{v}_l$ (the receive filter of System 1) and the receive filter
$\mb{u}_l$ (the transmit filter of System 1). The input vector is
generated using the same power allocation profile as System 1 by
{\small$\mb{x}=\underset{l=1:J}{\sum}\sqrt{\mc{P}_l}\mb{v}_lc_l$}.
The receiver uses the output at the receive filter $\mb{u}_k^\t$ to
extract Stream $c_k$. The system equation can be expressed as
\be\label{eq-dual}
\hat{r}_k=\left(\sum_{l=1:J}\sqrt{\mc{P}_l}\mb{v}_lc_l\right)\mb{F}\mb{u}_k^\t+\mb{w}\mb{u}_k^\t.
\ee\hspace{-3pt} The receiver treats interfering streams as noises,
and the equivalent output SNR at the filter $\mb{u}_k$ is
\be\label{eq-SNRdual}\mr{SNR}_{\mr{dual},k}=\frac{\mc{P}_k\|\mb{v}_k\mb{F}\mb{u}_k^\t\|^2}{\underset{l\neq
k}{\sum}\mc{P}_l\|\mb{v}_l\mb{F}\mb{u}_k^\t\|^2+1}.\ee The first
term of the denominator represents the power of interference, while
the second term represents the variance of the equivalent noise
$\mb{w}\mb{u}_k^\t$. In what follows, we define duality.
\begin{definition}\label{def-1} System 2 is called
the \emph{dual} of System 1 with ZF designs if:
\begin{enumerate}
  \item System 1 uses $\mb{u}_l$ and $\mb{v}_l$ as
transmit and receive filters, respectively. System 2 uses
$\mb{v}_l$ and $\mb{u}_l$ as transmit and receive filters,
respectively.
\item Both systems have the same power allocation profile $\mc{P}_l$.
  \item The channel matrices are related by $\mb{F}=\mb{Z}^\t$.
  \item The filters $\mb{u}_l$ and $\mb{v}_l$ satisfy the ZF relationship in
  \eqref{eq-ZF}.
\end{enumerate}
\end{definition}

\begin{proposition}\label{prop-1}
Both the \emph{original} and the \emph{dual} systems have the same
SNR at the output of the $k$th receive filter.
\end{proposition}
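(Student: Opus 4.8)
The plan is to write out both SNR expressions \eqref{eq-SNRorig} and \eqref{eq-SNRdual} explicitly and show they coincide once the four conditions of Definition \ref{def-1} are imposed. The engine of the whole argument is a single scalar identity. Because $\mb{u}_k\mb{Z}\mb{v}_l^\t$ is a $1\times 1$ quantity it equals its own transpose, so using $\mb{F}=\mb{Z}^\t$ I can write $\mb{v}_l\mb{F}\mb{u}_k^\t=\mb{v}_l\mb{Z}^\t\mb{u}_k^\t=(\mb{u}_k\mb{Z}\mb{v}_l^\t)^\t=\mb{u}_k\mb{Z}\mb{v}_l^\t$. Thus every bilinear form appearing in the dual SNR is literally one of the bilinear forms of the original system, only with the index roles of $\mb{u}$ and $\mb{v}$ interchanged.

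The heart of the proof, and the step I expect to require the most care to state cleanly, is to argue that the interference term in the denominator of $\mr{SNR}_{\mr{dual},k}$ vanishes identically. Here I would invoke the ZF relationship \eqref{eq-ZF}, namely $\mb{u}_l\mb{Z}\mb{v}_k^\t=0$ for $l\neq k$. The subtlety is that this condition was imposed to null interference at the $k$th receive filter of System 1, so it is not a priori obvious that it also nulls interference in System 2, where interfering streams are merely treated as noise. The resolution is that the ZF condition is symmetric under the swap of the two distinct summation indices: relabeling shows $\mb{u}_k\mb{Z}\mb{v}_l^\t=0$ for every $l\neq k$ as well. Combined with the scalar identity above, this yields $\mb{v}_l\mb{F}\mb{u}_k^\t=0$ for all $l\neq k$, so the entire interference sum $\sum_{l\neq k}\mc{P}_l\|\mb{v}_l\mb{F}\mb{u}_k^\t\|^2$ in \eqref{eq-SNRdual} collapses to zero.

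With the interference gone, the dual SNR reduces to $\mr{SNR}_{\mr{dual},k}=\mc{P}_k\|\mb{v}_k\mb{F}\mb{u}_k^\t\|^2/1$, where the $+1$ in the denominator is exactly the variance $\mb{u}_k\Exp[\mb{w}^\t\mb{w}]\mb{u}_k^\t=\|\mb{u}_k\|^2=1$ of the filtered noise $\mb{w}\mb{u}_k^\t$, using the normalization $\|\mb{u}_k\|^2=1$. The final step applies the same scalar-transpose identity to the signal term, $\mb{v}_k\mb{F}\mb{u}_k^\t=\mb{u}_k\mb{Z}\mb{v}_k^\t$, so that $\mr{SNR}_{\mr{dual},k}=\mc{P}_k\|\mb{u}_k\mb{Z}\mb{v}_k^\t\|^2=\mr{SNR}_{\mr{orig},k}$, matching \eqref{eq-SNRorig}; since both systems share the power profile $\mc{P}_k$ by condition (2), the equality holds for every $k$. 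The only genuine obstacle is conceptual rather than computational: recognizing that the ZF design, though motivated by System 1, automatically enforces interference-free reception in its dual, which is precisely what upgrades the dual's interference-as-noise SNR to the original's interference-free SNR.
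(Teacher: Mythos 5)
Your proof is correct and follows essentially the same route as the paper's: use $\mb{F}=\mb{Z}^\t$ together with the ZF condition \eqref{eq-ZF} to kill the interference sum in \eqref{eq-SNRdual}, then identify the surviving signal term with \eqref{eq-SNRorig}. The only difference is that you spell out the scalar-transpose identity $\mb{v}_l\mb{Z}^\t\mb{u}_k^\t=\mb{u}_k\mb{Z}\mb{v}_l^\t$ and the index-relabeling of \eqref{eq-ZF}, steps the paper leaves implicit, which makes your version slightly more rigorous but not a different argument.
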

\begin{proof}
When $\mb{F}=\mb{Z}^\t$, due to the ZF relationship in
\eqref{eq-ZF}, the power of interference in \eqref{eq-SNRdual}
is $\mc{P}_l\|\mb{v}_l\mb{Z}^\t\mb{u}_k^\t\|^2=0$ for $l\neq
k$. Thus, \eqref{eq-SNRdual} can be simplified as
$\mr{SNR}_{\mr{dual},k}=\mc{P}_k\|\mb{v}_k\mb{Z}^\t\mb{u}_k^\t\|^2$.
From \eqref{eq-SNRorig}, we have
$\mr{SNR}_{\mr{orig},k}=\mr{SNR}_{\mr{dual},k}$.
\end{proof}

\subsection{Application in point-to-point MIMO systems: Dual Alamouti codes}\label{subsec-Alamouti}
In this subsection, we present an example of applying the
duality principle to obtain the dual of Alamouti
systems\cite{Alamouti98}. For a $2\times N$ MIMO system with
two antennas at the transmitter and $N$ antennas at the
receiver, the Alamouti system sends two symbols $s_1$ and
$s_2$ in two time slots. The symbols are drawn independently
and uniformly from a normalized constellation $\mc{S}$ with
finite cardinality. For simplicity, we assume $\mc{S}$ to be a
PSK constellation. The transmitted matrix has an Alamouti
structure
\begin{align}
\mb{S}=\sqrt{\frac{P}{2}}\left[\begin{array}{cc}s_1& s_2\\
-s_2^*& s_1^*\end{array}\right],
\end{align}
where $P$ denotes the transmitted power per time slot. Let us
denote $y_{ti}$ as the receive signal at time slot $t$ and
receive Antenna $i$. The receiver calculates the negative
conjugate of the receive signal in time slot 2 and stacks the
signals into a $2N\times 1$ vector
$\tilde{\mb{y}}=\left[y_{11}\ -y_{21}^*\ \cdots \ y_{1N}\
-y_{2N}^*\right]^\t$. The receiver further decouples these two
symbols by calculating
$\hat{\mb{y}}=\frac{\tilde{\mb{G}}^*}{\|\mb{G}\|}\tilde{\mb{y}}$.
The matrix $\mb{G}\in \mathds{C}^{2\times N}$ denotes the
channel matrix whose $(j,i)$ entry $g_{ji}$ denotes the
channel coefficient from transmit Antenna $j$ to receive
Antenna $i$. The matrix $\tilde{\mb{G}}\in
\mathds{C}^{2N\times 2}$ is composed of the $2\times 2$
equivalent Alamouti channel matrices $\tilde{\mb{G}}_i$ at
Antenna $i$,
\begin{align}\label{eq-alarec}
\tilde{\mb{G}}_i=\left[\begin{array}{cc}g_{1i}& g_{2i}\\ -g_{2i}^*&
g_{1i}^*\end{array}\right],
\tilde{\mb{G}}=\left[\begin{array}{ccc}\tilde{\mb{G}}_1^*&
\cdots&\tilde{\mb{G}}_N^*
\end{array}\right]^*.
\end{align}
Note that $\hat{\mb{y}}\in \mathds{C}^{2\times 1}$.
Symbol-by-symbol decoding of $s_j$ is performed based on the
$j$th component of $\hat{\mb{y}}$.

In what follows, we use duality to obtain the dual of Alamouti
systems. Since symbol conjugates are excluded from the
definition of duality and Alamouti codes send conjugates of
symbols, we separate the real and imaginary components of one
symbol and stack them into a $2\times 1$ vector. Then, the
conjugate of this symbol can be equivalently written as a
linear transformation of this vector using a $2\times 2$
matrix $\diag [1,-1]$. For this reason, we also need to
separate the real and imaginary parts of the receive signals.
Let us denote the $j$th component of $\hat{\mb{y}}$ as
$\hat{{y}}_j$, which is the output after symbols are decoupled
at the receiver. We stack the real and imaginary parts of
$\hat{{y}}_j$ into a $1\times 4$ vector $\left[\real
\hat{y}_{1}\ \imag \hat{y}_{1} \ \real \hat{y}_{2}\ \imag
\hat{y}_{2}\right]$. For the Alamouti system, the equivalent
equations after symbol decoupling can be expanded as
\begin{align}\label{eq-equi}
\left[\begin{array}{c}\real \hat{y}_{1}\\ \imag \hat{y}_{1} \\
\real \hat{y}_{2}\\ \imag
\hat{y}_{2}\end{array}\right]^\t=\sqrt{\frac{P}{2}}\left[\begin{array}{c}\real s_1 \\
\imag s_1
\\
\real s_2 \\ \imag
s_2\end{array}\right]^\t\mb{U}\underset{\mb{I}_2\otimes \hat{\mb{G}}}{\underbrace{\left[\begin{array}{ll}\hat{\mb{G}}&\mb{0}_{4, 2N}\\
\mb{0}_{4,
2N}&\hat{\mb{G}}\end{array}\right]}}\mb{V}^\t+{\underset{\bar{\mb{n}}}{\underbrace{\left[\begin{array}{c}
 \bar{\mb{n}}_{1}\\
 \bar{\mb{n}}_{2}\end{array}\right]^\t}}}\mb{V}^\t,\end{align}
where $\mb{U}\in \mathds{R}^{4\times 8}$, $\hat{\mb{G}}\in
\mathds{R}^{4\times 2N}$, $\mb{V}\in \mathds{R}^{4\times 4N}$,
and $\bar{\mb{n}}\in \mathds{R}^{1\times 4N}$ denote the
transmit processing matrix, the equivalent fading channel
matrix, the receive processing matrix, and equivalent noise
vector, respectively. The matrices $\hat{\mb{G}}$ and
$\bar{\mb{n}}_t$ can be expressed as
\begin{align*}
\hat{\mb{G}}=\left[\begin{array}{rrrrr}\real g_{11}& \imag g_{11}& \cdots & \real g_{1N} & \imag g_{1N}\\
-\imag g_{11}& \real g_{11} &\cdots& -\imag g_{1N} & \real
g_{1N}\\
\real g_{21}& \imag g_{21}& \cdots & \real g_{2N} & \imag g_{2N}\\
-\imag g_{21}& \real g_{21} &\cdots& -\imag g_{2N} & \real
g_{2N}\end{array}\right],
\bar{\mb{n}}_t=\left[\begin{array}{ccccc}\real n_{t1}&\imag n_{t1}&
\cdots &\real n_{tN}& \imag n_{tN}\end{array}\right]^\t,
\end{align*}
where $n_{ti}$ denotes the $\mc{CN}(0,1)$ noise at time slot
$t$ and receive Antenna $i$. The transmit processing matrix
{\setlength{\arraycolsep}{1pt}$\mb{U}=\left[\begin{array}{cc}\mb{I}_4&\mb{A}\end{array}\right]$}
and the receive processing matrix
$\mb{V}=\frac{(\mb{D}\mc{G})^\t}{\|\mb{G}\|}$ with
{\setlength{\arraycolsep}{1pt}$\mb{D}=\diag
\left[\mb{I}_{2N},\mb{I}_N\otimes\diag[-1,1]\right]$} and
{\setlength{\arraycolsep}{1pt}$\mc{G}=\left[\begin{array}{cc}\hat{\mb{G}}&\bar{\mb{G}}
\end{array}\right]^\t$}. The matrices $\mb{A}$ and $\bar{\mb{G}}$ can be expressed as
\begin{align}\label{eq-matrices}\mb{A}=\left[\begin{array}{rrrrrrrr}0&0&1&0\\0&0&0&-1\\-1&0&0&0\\0&1&0&0\end{array}\right], \bar{\mb{G}}=\left[\begin{array}{rrrrr}-\real g_{21}& \imag g_{21}& \cdots & -\real g_{2N} & \imag g_{2N}\\
-\imag g_{21}& -\real g_{21} &\cdots& -\imag g_{2N} & -\real
g_{2N}\\
\real g_{11}& -\imag g_{11}& \cdots & \real g_{1N} & -\imag g_{1N}\\
\imag g_{11}& \real g_{11} & \cdots & \imag g_{1N} & \real
g_{1N}\end{array}\right].
\end{align}
For the transmit processing matrix $\mb{U}$, each row
represents one dispersion matrix of Alamouti codes\cite{hj}.
Inside the receive processing matrix $\mb{V}$, the matrix
$\mb{D}$ contains the calculations of the negative conjugate
of the receive signal in time slot 2, while the matrix
$\frac{\mc{G}}{\|\mb{G}\|}$ decouples these two symbols.

Note that the system equation in \eqref{eq-equi} resembles the
\emph{original} system with a $1\times 8$ input vector, a
$4N\times 1$ output vector, and a channel matrix
$\mb{I}_2\otimes \hat{\mb{G}}$. Four data streams are
transmitted: $c_1=\real s_1$, $c_2=\imag s_1$, $c_3=\real
s_2$, and $c_4=\imag s_2$. The streams are sent using the rows
of $\frac{\mb{U}}{\sqrt{2}}$ as the transmit filters (the
transmit filters are normalized by dividing by $\sqrt{2}$) and
the rows of $\mb{V}$ as the receive filters. Let $n=8$,
$m=4N$, and $\mb{Z}=\mb{I}_2\otimes \hat{\mb{G}}$. The
transmit power is $\mc{P}=4P$ per transmission, and equally
allocated among four streams. Also, it can be verified that
these filters satisfy the ZF conditions in \eqref{eq-ZF}.
Using the definition of duality, we obtain the system equation
for its \emph{dual} system as
{\setlength{\arraycolsep}{1pt}\begin{align}\label{eq-aladual} \left[\begin{array}{c}\real \hat{r}_1 \\
\imag \hat{r}_1
\\ \real \hat{r}_2 \\ \imag
\hat{r}_2\end{array}\right]^\t=\sqrt{P}\left[\begin{array}{c}\real s_1 \\
\imag s_1
\\
\real s_2 \\ \imag
s_2\end{array}\right]^\t\mb{V}\left[\begin{array}{ll}\hat{\mb{G}}^t&\mb{0}_{2N, 4}\\
\mb{0}_{2N,
4}&\hat{\mb{G}}^t\end{array}\right]\frac{\mb{U}^\t}{\sqrt{2}}+\mb{w}\frac{\mb{U}^\t}{\sqrt{2}},\end{align}}
\hspace{-3pt}where the rows of $\mb{V}$ are the transmit
filters and the rows of $\frac{\mb{U}}{\sqrt{2}}$ are the
receive filters. Note that $\mb{V}$ depends on the channel
information, while $\mb{U}$ is independent of the channel
information. The dual system uses CSIT and does not require
CSIR. For the dual system, denote the channel path from
transmit Antenna $i$ to receive Antenna $j$ as $h_{ij}$. From
the definition of duality, we have $g_{ji}=h_{ij}^*$.
Replacing $g_{ji}$ with $h_{ij}^*$ in \eqref{eq-aladual},
combining the real and imaginary components, and converting
back to matrix expression, we obtain the \emph{dual Alamouti
codes}.

More specifically, the dual Alamouti system is described as
follows. The transmitter has $N$ antennas and the receiver has
$2$ antennas. Denote the $N\times 2$ channel matrix as
$\mb{H}$ whose $(i,j)$ entry is $h_{ij}$. Two symbols $s_1$
and $s_2$ first form a $2\times 2$ Alamouti block code. Then,
a scaled Hermitian of $\mb{H}$ is used as the precoder. The
$2\times N$ transmitted matrix can be expressed as
\begin{align}\label{eq-dualcodes}
\mb{X}=\sqrt{P}{\left[\begin{array}{cc}s_1 & s_2
\\ -s_2^* &
s_1^*\end{array}\right]}\frac{\mb{H}^*}{\|\mb{H}\|}.\end{align}
The transmit power can be verified to be $P$ per time slot,
and powers are equally allocated between the two symbols. The
$(t,i)$ entry of $\mb{X}$ is sent at Antenna $i$ and time slot
$t$. The receive signal at
Antenna $j$ can be expressed as {\setlength{\arraycolsep}{1pt}\begin{align}\left[\begin{array}{c}r_{1j}\\
r_{2j}\end{array}\right]=\mb{X}\mb{h}_j+\left[\begin{array}{c}w_{1j}\\
w_{2j}\end{array}\right]=\frac{\sqrt{P}}{\|\mb{H}\|}\left[\begin{array}{cc}s_1
& s_2
\\ -s_2^* & s_1^*\end{array}\right]\mb{H}^*\mb{h}_j+\left[\begin{array}{c}w_{1j}\\
w_{2j}\end{array}\right]=\frac{\sqrt{P}}{\|\mb{H}\|}\left[\begin{array}{cc}s_1
& s_2
\\ -s_2^* & s_1^*\end{array}\right]\left[\begin{array}{c}\mb{h}_1^*\mb{h}_j\\ \mb{h}_2^*\mb{h}_j \end{array}\right]+\left[\begin{array}{c}w_{1j}\\
w_{2j}\end{array}\right], \label{eq-dualtran}\end{align}}where
$\mb{h}_j$ denotes the channel vector to receive Antenna $j$,
i.e., the $j$th column of $\mb{H}$, and $w_{tj}$ denotes the
$\mc{CN}(0,1)$ AWGN at receive Antenna $j$ and time slot $t$.
Take conjugate of the receive signals in time slot 2. We can
stack receive signals into a $4\times 1$ vector as
\begin{align}
\left[\begin{array}{c}r_{11}\\
r_{21}^*\\ r_{12}\\
r_{22}^*\end{array}\right]=\frac{\sqrt{P}}{\|\mb{H}\|}\left[\begin{array}{cc}\mb{h}_1^*\mb{h}_1& \mb{h}_2^*\mb{h}_1\\ \mb{h}_1^*\mb{h}_2 &-\mb{h}_1^*\mb{h}_1 \\\mb{h}_1^*\mb{h}_2& \mb{h}_2^*\mb{h}_2\\ \mb{h}_2^*\mb{h}_2 &-\mb{h}_2^*\mb{h}_1 \end{array}\right]\left[\begin{array}{c}s_1\\
s_2\end{array}\right]+\left[\begin{array}{c}w_{11}\\
w_{21}^*\\ w_{12}\\
w_{22}^*\end{array}\right].
\end{align}

Without CSIR, the receiver decouples $s_1$ and $s_2$ as,
{\small\setlength{\arraycolsep}{1pt}\begin{align}\label{eq-step1}
&\hat{r}_1=\frac{r_{11}+r_{22}^*}{\sqrt{2}}=\frac{\sqrt{P}}{\sqrt{2}\|\mb{H}\|}\left(s_1\mb{h}_1^*\mb{h}_1+s_2\mb{h}_2^*\mb{h}_1+(-s_2)\mb{h}_2^*\mb{h}_1+s_1\mb{h}_2^*\mb{h}_2\right)+\frac{w_{11}+w_{22}^*}{\sqrt{2}}=\sqrt{\frac{P}{2}}\|\mb{H}\|s_1+\frac{w_{11}+w_{22}^*}{\sqrt{2}},\\
\label{eq-step2}&\hat{r}_2=\frac{r_{12}-r_{21}^*}{\sqrt{2}}=\frac{\sqrt{P}}{\sqrt{2}\|\mb{H}\|}\left(s_1\mb{h}_1^*\mb{h}_2+s_2\mb{h}_2^*\mb{h}_2-(-s_2)\mb{h}_1^*\mb{h}_1-s_1\mb{h}_1^*\mb{h}_2\right)+\frac{w_{12}-w_{21}^*}{\sqrt{2}}=\sqrt{\frac{P}{2}}\|\mb{H}\|s_2+\frac{w_{12}-w_{21}^*}{\sqrt{2}}.
\end{align}}
The power of the equivalent noise is normalized by dividing by $\sqrt{2}$. 
To decode $s_j$, the receiver performs the maximum-likelihood (ML)
decoding for the equivalent systems in \eqref{eq-step1} and
\eqref{eq-step2} as{\setlength{\arraycolsep}{1pt}\begin{eqnarray}
\underset{s\in
\mc{S}}{\min}\left|\hat{r}_j-\sqrt{\frac{P}{2}}\|\mb{H}\|s\right|^2=\underset{s\in
\mc{S}}{\max}\ 2\real\left(\hat{r}_j
s^*\right)-\sqrt{\frac{P}{2}}\|\mb{H}\|\left|s\right|^2,
j\in\{1,2\}.\label{eq-ML}
\end{eqnarray}}
Since $\mc{S}$ is a PSK constellation, the second term
{\small$\sqrt{P/2}\|\mb{H}\|\left|s\right|^2$} has a constant value
for all points in $\mc{S}$ and thus can be
ignored\footnote{Generalizing Dual Alamouti codes for other
constellations such as QAM can be found in \cite{DualCode}}.
Eq.~\eqref{eq-ML} can be further simplified as
{\small$\underset{s\in \mc{S}}{\max}\ \real\left(\hat{r}_j
s^*\right)$}. This shows that the ML decoding can be performed
without CSIR. Symbol-by-symbol decoding complexity is also obtained,
i.e. $s_1$ and $s_2$ can be decoded separately.

The dual Alamouti codes need perfect CSIT and no CSIR, and are
belong to System D in Table \ref{table}. The performance of
dual Alamouti codes is described next. Two symbols are sent in
two time slots from \eqref{eq-dualcodes}. Thus, the rate of
the dual codes is one. From \eqref{eq-step1} and
\eqref{eq-step2}, the receive SNRs of both symbols are
$\sqrt{\frac{P}{2}}\|\mb{H}\|$, which is the same as that of
the original Alamouti systems. This confirms with Proposition
\ref{prop-1}. Since the ML decoding of dual Alamouti codes can
be conducted without CSIR, both Alamouti codes and its dual
codes achieve the same array gain and diversity gain.

\section{The Downlink IC Scheme for a Two-user MIMO
BC}\label{sec-downlinkIC}

This section focuses on the two-user MIMO BC system with CSIT and no
CSIR. Using our dual Alamouti code, each user can transmit in an
orthogonal time slot to avoid interference. However, this time
division multi-access (TDMA) method sacrifices the transmission
rate. We propose a concurrent transmission scheme for these systems
using the duality principle obtained in Section \ref{sec-duality}.
In the uplink two-user MAC, the IC scheme in \cite{NaSeCa,AlCa}
concurrently transmits both users' symbols through Alamouti codes
and linearly zero-forces the interference at the receiver. It
achieves full transmit diversity and user-by-user decoding
complexity. Since it satisfies the conditions for the original
system described in Definition \ref{def-1}, we use the uplink IC
scheme as the original system and construct its dual system for BCs.
In Subsection \ref{subsec-MAC}, we review the uplink IC scheme.
Subsection \ref{subsec-BC} presents the downlink IC scheme. Further
discussions on power allocation and diversity gain are provided in
Subsection \ref{subsec-dis}.

We describe our transmission schemes for a two-user $N\times 2$ BC
with $N$ antennas at the transmitter and two antennas at each
receiver. The generalization to a BC with any number of users and
antennas is straightforward using the duality principle. The channel
coefficients are i.~i.~d.~$\mc{CN}(0,1)$ distributed and known
perfectly at the transmitter but not at the receivers. Also, we
assume that the channels are block fading and are constant during
each transmission. We adopt the same notation as in Section
\ref{sec-duality} and add a superscript $k$ as user index. Thus,
$g_{ji}^{(k)}$ denotes the channel coefficients from transmit
Antenna $j$ of User $k$ to receive Antenna $i$ in the original MAC,
and $h_{ij}^{(k)}$ denotes the channel coefficient from transmit
Antenna $i$ to receive Antenna $j$ of User $k$ in the dual BC.

\subsection{The Uplink IC Scheme}\label{subsec-MAC}
We review the IC scheme for a two-user $2\times N$ MAC with two
antennas at each user and $N$ antennas at the
receiver\cite{NaSeCa,AlCa}. The system diagram is illustrated in
Fig.~\ref{fig-IC-uplink}. User $k$ sends two independent symbols
$s_1^{(k)}$ and $s_2^{(k)}$ to the receiver encoded by Alamouti
codes using power $\frac{P}{2}$ per time slot per user. Both users
transmit concurrently. Then, power is equally allocated between two
users and the total power of the network is $P$. Denote $y_{ti}$ as
the received signal at time slot $t$ and Antenna $i$. The receiver
stacks the receive signal into a $2N\times 1$ vector as
$\bar{\mb{y}}=\left[y_{11}\ -y_{21}^*\ \cdots \ y_{1N}\
-y_{2N}^*\right]^\t$. The equivalent system equation of
$\bar{\mb{y}}$ can be expressed as
\begin{align}
\bar{\mb{y}}=\sqrt{\frac{P}{4}}\underset{\tilde{\mb{G}}^{(1)}}{\underbrace{\left[\begin{array}{c} \tilde{\mb{G}}_1^{(1)}\\ \vdots \\ \tilde{\mb{G}}_N^{(1)}\end{array}\right]}}\left[\begin{array}{c} s_{1}^{(1)}\\
s_2^{(1)}\end{array}\right]+\sqrt{\frac{P}{4}}\underset{\tilde{\mb{G}}^{(2)}}{\underbrace{\left[\begin{array}{c} \tilde{\mb{G}}_1^{(2)}\\ \vdots \\ \tilde{\mb{G}}_N^{(2)}\end{array}\right]}}\left[\begin{array}{c} s_{1}^{(2)}\\
s_{2}^{(2)}\end{array}\right]+\bar{\mb{n}},
\end{align}
where $\bar{\mb{n}}=\left[n_{11}\ -n_{21}^*\ \cdots \ n_{1N}\
-n_{2N}^*\right]^\t$ denotes the equivalent $2N\times 1$ noise
vector and $\tilde{\mb{G}}_i^{(k)}$ denotes the equivalent Alamouti
channel matrix from User $k$ to Antenna $i$ as that defined in
\eqref{eq-alarec}.

The receiver decouples four symbols through two steps. In Step
1, symbols of each user are separated using ZF. Two
$2(N-1)\times 2N$ ZF matrices are formed as
\begin{align}\label{eq-usersep}
\bar{\mb{Z}}^{(k)}=\left[\begin{array}{ccccc}\frac{2\tilde{\mb{G}}_1^{(k)*}}{\|\tilde{\mb{G}}_1^{(k)}\|^2}&
\mb{0}_2&\cdots & \mb{0}_2 &
-\frac{2\tilde{\mb{G}}_N^{(k)*}}{\|\tilde{\mb{G}}_N^{(k)}\|^2}\\
\mb{0}_2 &
\frac{2\tilde{\mb{G}}_2^{(k)*}}{\|\tilde{\mb{G}}_2^{(k)}\|^2}&
\cdots & \mb{0}_2 &
-\frac{2\tilde{\mb{G}}_N^{(k)*}}{\|\tilde{\mb{G}}_N^{(k)}\|^2}\\
\vdots & \vdots & \ddots &\vdots&\vdots\\
\mb{0}_2&\mb{0}_2&\cdots&
\frac{2\tilde{\mb{G}}_{N-1}^{(k)*}}{\|\tilde{\mb{G}}_{N-1}^{(k)}\|^2}&
-\frac{2\tilde{\mb{G}}_N^{(k)*}}{\|\tilde{\mb{G}}_N^{(k)}\|^2}
\end{array}\right],\ k\in \{1,2\},
\end{align}
where the receiving filter $\bar{\mb{Z}}^{(k)}$ is used to
zero-force User $k$'s symbols, i.e.
$\bar{\mb{Z}}^{(k)}\tilde{\mb{G}}^{(k)}=\mb{0}$, and extracts the
other user's symbols. The IC process can be conducted as
\begin{align}
\tilde{\mb{y}}^{(k)}=\bar{\mb{Z}}^{(\underline{k})}\bar{\mb{y}}=\sqrt{\frac{P}{4}}\bar{\mb{Z}}^{(\underline{k})}\tilde{\mb{G}}^{(k)}\left[\begin{array}{c} s_{1}^{(k)}\\
s_2^{(k)}\end{array}\right]+\bar{\mb{Z}}^{(\underline
k)}\bar{\mb{n}},
\end{align}
where the index $\underline{k}=\{1,2\}\setminus{k}$, denotes
the user other than User $k$. The vector
$\tilde{\mb{y}}^{(k)}$ contains only the symbols of User $k$.
In Step 2, two symbols from the same user are further
decoupled. Note that the $2\times 2$ submatrices contained in
the resulting equivalent channels
$\bar{\mb{Z}}^{(\underline{k})}\tilde{\mb{G}}^{(k)}$ have
Alamouti structures due to the completeness of Alamouti
matrices under addition and multiplication. The receiver
constructs the $2\times 2(N-1)$ symbol separating filters for
User $k$ as
\begin{align}\label{eq-symsep}
\mb{F}^{(k)}=\alpha^{(k)}\left(\bar{\mb{Z}}^{(\underline
k)}\tilde{\mb{G}}^{(k)}\right)^*,
\end{align}
where the coefficient $\alpha^{(k)}$ normalizes the equivalent
combined filter, i.e.,
$\frac{\|\mb{F}^{(k)}\bar{\mb{Z}}^{(\underline{k})*}\|^2}{2}=1$.
Therefore, we have $
\alpha^{(k)}=\frac{\sqrt{2}}{\|\tilde{\mb{G}}^{(k)*}\bar{\mb{Z}}^{(\underline
k)*}\bar{\mb{Z}}^{(\underline{k})}\|}$. Two symbols are
separated through
\begin{align}
\mb{F}^{(k)}\tilde{\mb{y}}^{(k)}=\frac{\alpha^{(k)}}{2}\sqrt{\frac{P}{4}}\|\bar{\mb{Z}}^{(\underline{k})}\tilde{\mb{G}}^{(k)}\|^2\left[\begin{array}{c} s_{1}^{(k)}\\
s_2^{(k)}\end{array}\right]+\mb{F}^{(k)}\bar{\mb{Z}}^{(\underline
k)}\bar{\mb{n}},
\end{align}
It can be observed that the resulting channel for each symbol is a
scalar, and each entry in the equivalent noise vector
$\mb{F}^{(k)}\bar{\mb{Z}}^{(\underline{k})*}\bar{\mb{n}}$ can be
verified to be i.~i.~d.~$\mc{CN}(0,1)$. Thus, symbol-by-symbol
decoding can be conducted based on the $l$th entry of
$\mb{F}^{(k)}\tilde{\mb{y}}^{(k)}$ to recover $s_l^{(k)}$. In total,
four symbol-by-symbol decoding procedures are needed at the receiver
to recover the transmitted symbols.

\subsection{The New Downlink IC Scheme}\label{subsec-BC}
In this subsection, we propose our downlink IC scheme, which is the
dual system of the uplink IC scheme. Since the details of the
transform from the original system to the dual system are involved
and similar to that in Subsection \ref{subsec-Alamouti}, we omit the
construction steps, and directly present the new transmission
scheme.

The system diagram is shown in Fig.~\ref{fig-IC-downlink}. In
the downlink system, we use the user indices to denote
receivers rather than transmitters. The transmitter sends two
symbols $s_1^{(k)}$ and $s_2^{(k)}$ to User $k$, encoded
through Alamouti codes. The process of precoding has two
steps. The first precoder $\mb{E}^{(k)}$ is called \emph{the
symbol separating precoder}, and reuses the structure of
symbol separating filters in \eqref{eq-symsep}. Note that
$\mb{F}^{(k)}$ is a function of $g_{ji}^{(k)}$ and
$g_{ji}^{(\underline{k})}$. The $2\times 2(N-1)$ filter
$\mb{E}^{(k)}$ is obtained by replacing every
$g_{ji}^{(\kappa)}$ in $\mb{F}^{(k)}$ with $h_{ij}^{(\kappa)}$
for $\kappa\in\{1,2\}$. The output from the symbol separating
precoder can be written as
\begin{align}\label{eq-sympre}
\tilde{\mb{X}}^{(k)}=\left[\begin{array}{cc} s_1^{(k)}& s_2^{(k)}\\
-s_2^{(k)*}& s_1^{(k)*}\end{array}\right]\mb{E}^{(k)},\ k\in
\{1,2\}.
\end{align}
It allows each receiver to decouple its own symbols without
knowing CSIR, as will be shown later. The second precoder is
called \emph{the user separating precoder}. Let
$\mb{h}_i^{(k)}=\left[h_{i1}^{(k)}\ h_{i2}^{(k)}\right]$,
denoting the $1\times 2$ channel vector from transmit Antenna
$i$ to User $k$. We design the
$2(N-1)\times N$ user separating precoder for User $k$ as 
\begin{align}\label{eq-ICpre}
\bar{\mb{B}}^{(\underline{k})}=\left[\begin{array}{ccccc}\frac{\mb{h}_{1}^{(\underline{k})*}}{\|\mb{h}_{1}^{(\underline{k})}\|^2}&\mb{0}_{2,1}&\cdots&\mb{0}_{2,1}&-\frac{\mb{h}_{N}^{(\underline{k})*}}{\|\mb{h}_{N}^{(\underline{k})}\|^2}\\
\mb{0}_{2,1}&\frac{\mb{h}_{2}^{(\underline
k)*}}{\|\mb{h}_{2}^{(\underline
k)}\|^2}&\cdots&\mb{0}_{2,1}&-\frac{\mb{h}_{N}^{(\underline
k)*}}{\|\mb{h}_{N}^{(\underline
k)}\|^2}\\ \vdots & \vdots & \ddots & \vdots \\
\mb{0}_{2,1}&\mb{0}_{2,1}&\cdots&\frac{\mb{h}_{N-1}^{(\underline
k)*}}{\|\mb{h}_{N-1}^{(\underline
k)}\|^2}&-\frac{\mb{h}_{N}^{(\underline
k)*}}{\|\mb{h}_{N}^{(\underline{k})}\|^2}\end{array}\right].
\end{align}
It can be observed that each user's separating precoder
depends on the channel coefficients to the other receiver. The
second precoder allows the undesired receiver to cancel the
interference. The transmitted $2\times N$ matrix is linearly
generated by multiplying the Alamouti code matrix with the two
precoders and adding up users' signals
\begin{align}\label{eq-tranX0}
\mb{X}=\sqrt{\frac{P}{2}}\left(\tilde{\mb{X}}^{(1)}\bar{\mb{B}}^{(2)}+\tilde{\mb{X}}^{(2)}\bar{\mb{B}}^{(1)}\right).
\end{align}
The transmitted power in two time slots is
$\underset{s_l^{(k)}}{\Exp}\tr\left(\mb{X}^*\mb{X}\right)=2P$.
Therefore, the system satisfies the short-term power
constraint $P$ per time slot. Also, it can be verified that
\begin{align}\label{eq-powercons}\underset{s_l^{(k)}}{\Exp}\tr\left(\frac{P}{2}\tilde{\mb{X}}^{(k)*}\bar{\mb{B}}^{(\underline{k})}\bar{\mb{B}}^{(\underline
k)*}\tilde{\mb{X}}^{(k)}\right)=P,\ k\in\{1,2\},
\end{align}
i.e., power is equally allocated between the two
users\footnote{Since power is equally allocated between the two
users in the original uplink MAC systems, the dual systems also have
equal power allocation. With CSIT, power allocation is possible, and
will
be discussed later. }. 

Let us denote the receive signal at time slot $t$ and Antenna
$j$ of User $k$ by $r_{tj}^{(k)}$ and i.~i.~d.~$\mc{CN}(0,1)$
noise by $w_{tj}^{(k)}$. Since channels are Rayleigh flat
fading, we have
\begin{align}
{r}_{tj}^{(k)}=\sum_{i=1:N}{x}_{ti}{h}_{ij}^{(k)}+{w}_{tj}^{(k)},
\end{align}
where ${x}_{ti}$ denotes the $(t,i)$ entry of the transmit block
$\mb{X}$. User $k$ decouples symbols and cancels interference in one
step. Decision variables for $s^{(k)}_1$ and $s^{(k)}_2$ are
constructed as
\begin{align}\label{eq-s1decoup}\bar{r}_1^{(k)}=r_{11}^{(k)}+r_{22}^{(k)*},\\
\label{eq-s2decoup}\bar{r}_2^{(k)}=r_{12}^{(k)}-r_{21}^{(k)*},\end{align}
respectively. The operations are independent of CSIR, and the
receiver does not need to learn the channel information. In
the following proposition, we show that interference is
cancelled and symbols are decoupled.
{\renewcommand{\baselinestretch}{1.4}
\begin{proposition}\label{prop1}
With the operations in \eqref{eq-s1decoup} and \eqref{eq-s2decoup},
the signals of the other user, i.e., User $(\underline{k})$, are
zero-forced. Further, the symbols $s_1^{(k)}$ and $s_2^{(k)}$ are
decoupled.
\end{proposition}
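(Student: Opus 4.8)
The plan is to pass to a $2\times 2$ matrix model of each receiver's observation and then show that the two fixed linear combinations in \eqref{eq-s1decoup}--\eqref{eq-s2decoup} simultaneously annihilate the unintended user and diagonalize the intended user's effective channel. First I would stack the samples of User $k$ into $\mb{R}^{(k)}=\left[\begin{array}{cc}r_{11}^{(k)}&r_{12}^{(k)}\\ r_{21}^{(k)}&r_{22}^{(k)}\end{array}\right]$, so that $\mb{R}^{(k)}=\mb{X}\mb{H}^{(k)}+\mb{W}^{(k)}$, where $\mb{H}^{(k)}\in\mathds{C}^{N\times 2}$ has $(i,j)$ entry $h_{ij}^{(k)}$ and $\mb{W}^{(k)}$ collects $w_{tj}^{(k)}$. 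Substituting \eqref{eq-tranX0} and \eqref{eq-sympre} and writing $\mc{A}(a,b)=\left[\begin{array}{cc}a&b\\ -b^*&a^*\end{array}\right]$ for the Alamouti matrix, this splits into desired and interference parts,
\[
\mb{R}^{(k)}=\sqrt{\tfrac{P}{2}}\,\mc{A}\!\left(s_1^{(k)},s_2^{(k)}\right)\mb{C}_{\mr{des}}^{(k)}+\sqrt{\tfrac{P}{2}}\,\mc{A}\!\left(s_1^{(\underline k)},s_2^{(\underline k)}\right)\mb{C}_{\mr{int}}^{(k)}+\mb{W}^{(k)},
\]
with the $2\times 2$ effective channels $\mb{C}_{\mr{des}}^{(k)}=\mb{E}^{(k)}\bar{\mb{B}}^{(\underline k)}\mb{H}^{(k)}$ and $\mb{C}_{\mr{int}}^{(k)}=\mb{E}^{(\underline k)}\bar{\mb{B}}^{(k)}\mb{H}^{(k)}$.

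Next I would write the decoupling as a single linear map on the effective channel. Expanding \eqref{eq-s1decoup}--\eqref{eq-s2decoup}, for any $\mb{C}=[c_{pq}]$ the combining applied to $\mc{A}(a,b)\mb{C}$ returns $\left[\bar r_1\ \ \bar r_2\right]^\t=\Phi(\mb{C})\left[a\ \ b\right]^\t$, where
\[
\Phi(\mb{C})=\left[\begin{array}{cc}c_{11}+c_{22}^*& c_{21}-c_{12}^*\\ c_{12}-c_{21}^*& c_{22}+c_{11}^*\end{array}\right].
\]
By linearity $\left[\bar r_1^{(k)}\ \ \bar r_2^{(k)}\right]^\t$ then equals $\sqrt{P/2}\,\Phi(\mb{C}_{\mr{des}}^{(k)})\left[s_1^{(k)}\ s_2^{(k)}\right]^\t+\sqrt{P/2}\,\Phi(\mb{C}_{\mr{int}}^{(k)})\left[s_1^{(\underline k)}\ s_2^{(\underline k)}\right]^\t$ plus combined noise. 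Two facts about $\Phi$ finish the argument: (i) if $\mb{C}$ is Hermitian then $\Phi(\mb{C})=\tr(\mb{C})\,\mb{I}_2$, so the intended symbols decouple with a common real gain; and (ii) $\Phi(\mb{C})=\mb{0}_2$ whenever $c_{22}=-c_{11}^*$ and $c_{12}=c_{21}^*$, which cancels the interference. The claim thus reduces to showing that $\mb{C}_{\mr{des}}^{(k)}$ is Hermitian and that $\mb{C}_{\mr{int}}^{(k)}$ lies in the kernel of (ii).

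To establish these two structural identities I would use the block forms of the precoders together with $g_{ji}^{(\kappa)}=h_{ij}^{(\kappa)*}$. Writing $\mb{h}_i^{(k)}$ for the $i$th row of $\mb{H}^{(k)}$, each $2\times 2$ block of $\bar{\mb{B}}^{(\kappa)}\mb{H}^{(k)}$ equals $\mb{h}_l^{(\kappa)*}\mb{h}_l^{(k)}/\|\mb{h}_l^{(\kappa)}\|^2-\mb{h}_N^{(\kappa)*}\mb{h}_N^{(k)}/\|\mb{h}_N^{(\kappa)}\|^2$, while $\mb{E}^{(k)}$ is a row of Alamouti blocks inherited from $\mb{F}^{(k)}$ in \eqref{eq-symsep} under $g\to h$. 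For the desired channel (precoder $\mb{E}^{(k)}$, index $\kappa=\underline k$), the Alamouti orthogonality $\mc{A}(a,b)^*\mc{A}(a,b)=(|a|^2+|b|^2)\mb{I}_2$ and the closure of Alamouti matrices under products and sums should collapse the $N-1$ block products into a Hermitian Gram-type matrix generalizing the single-user $\mb{H}^*\mb{H}$ of \eqref{eq-step1}--\eqref{eq-step2}, so fact (i) gives symbol decoupling. For the interference channel (precoder $\mb{E}^{(\underline k)}$, index $\kappa=k$), the same bookkeeping must instead produce the kernel structure in fact (ii).

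I expect the interference-cancellation identity to be the main obstacle. In the uplink scheme the zero-forcing $\bar{\mb{Z}}^{(\underline k)}\tilde{\mb{G}}^{(\underline k)}=\mb{0}$ holds \emph{block by block}, because each Alamouti block obeys $\tilde{\mb{G}}_i^{*}\tilde{\mb{G}}_i=\|\tilde{\mb{G}}_i\|^2\mb{I}_2/2$; by contrast the downlink product $\bar{\mb{B}}^{(k)}\mb{H}^{(k)}$ is a \emph{nonzero} stack of rank-one Hermitian differences, so no block-wise nulling occurs and the cancellation emerges only after $\mb{E}^{(\underline k)}$ and the combining $\Phi$ act jointly. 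Verifying that the resulting $\mb{C}_{\mr{int}}^{(k)}$ indeed satisfies $c_{22}=-c_{11}^*$ and $c_{12}=c_{21}^*$ is where the careful tracking of the $2\times 2$ block algebra is needed. As a consistency check, Proposition \ref{prop-1} with the duality construction already guarantees interference-free reception, so the identity must hold; the direct computation merely makes it explicit.
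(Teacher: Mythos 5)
Your $\Phi$-reduction itself is sound --- it is a faithful repackaging of what the combining \eqref{eq-s1decoup}--\eqref{eq-s2decoup} does --- and your fact (ii) is the right criterion for cancellation. Indeed, the interference half of your plan can be completed, although you leave exactly that step unproven: every $2\times 2$ block of $\bar{\mb{B}}^{(k)}\mb{H}^{(k)}$ equals $\frac{\mb{h}_l^{(k)*}\mb{h}_l^{(k)}}{\|\mb{h}_l^{(k)}\|^2}-\frac{\mb{h}_N^{(k)*}\mb{h}_N^{(k)}}{\|\mb{h}_N^{(k)}\|^2}$, a Hermitian \emph{traceless} matrix, which automatically satisfies $c_{22}=-c_{11}^{*}$ and $c_{12}=c_{21}^{*}$; and the set of matrices with that structure is a real subspace that is invariant under left multiplication by any Alamouti matrix (a one-line check), so $\mb{C}_{\mr{int}}^{(k)}$, being a sum of Alamouti blocks of $\mb{E}^{(\underline k)}$ times such blocks, inherits it. Note, however, that two of your side claims are off: the cancellation does not require $\mb{E}^{(\underline k)}$ ``acting jointly'' with $\Phi$ at all (any Alamouti-blocked left factor lands in the kernel class, which is why in the paper the interference term $\tilde{\mb{x}}^{(\underline k)}\tilde{\mb{B}}^{(k)}\tilde{\mb{H}}^{(k)}$ vanishes for \emph{every} $\tilde{\mb{x}}^{(\underline k)}$); and block-wise nulling \emph{does} occur in the paper's proof --- just in the lifted, post-combining domain, where Alamouti orthogonality gives $\tilde{\mb{H}}_l^{(k)*}\tilde{\mb{H}}_l^{(k)}=\|\mb{h}_l^{(k)}\|^2\mb{I}_2$, so each block row of $\tilde{\mb{B}}^{(k)}\tilde{\mb{H}}^{(k)}$ is $\mb{I}_2-\mb{I}_2=\mb{0}$.

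The genuine gap is fact (i) as you apply it: the pre-combining desired channel $\mb{C}_{\mr{des}}^{(k)}=\mb{E}^{(k)}\bar{\mb{B}}^{(\underline k)}\mb{H}^{(k)}$ is \emph{not} Hermitian in general, so the lemma your argument hinges on is false, and the ``Hermitian Gram-type matrix'' you expect never materializes in your domain. Concretely, for $N=2$ take $\mb{H}^{(1)}=\mb{I}_2$, $\mb{h}_1^{(2)}=[1\ \ i]$, $\mb{h}_2^{(2)}=[1\ \ 0]$; then the lifted desired channel $\tilde{\mb{B}}^{(2)}\tilde{\mb{H}}^{(1)}$ (equivalently $\Phi(\bar{\mb{B}}^{(2)}\mb{H}^{(1)})^{\t}$) equals $\mc{A}(\tfrac{1}{2},-1-\tfrac{i}{2})$, so \eqref{eq-beta} gives $\mb{E}^{(1)}=\beta^{(1)}\mc{A}(\tfrac{1}{2},-1-\tfrac{i}{2})^{*}$, and
\begin{align*}
\bar{\mb{B}}^{(2)}\mb{H}^{(1)}=\left[\begin{array}{cc}\tfrac{1}{2}&-1\\ -\tfrac{i}{2}&0\end{array}\right],\qquad
\mb{C}_{\mr{des}}^{(1)}=\mb{E}^{(1)}\bar{\mb{B}}^{(2)}\mb{H}^{(1)}=\beta^{(1)}\left[\begin{array}{cc}\tfrac{1}{2}-\tfrac{i}{2}&-\tfrac{1}{2}\\ -\tfrac{1}{2}&1-\tfrac{i}{2}\end{array}\right],
\end{align*}
whose diagonal entries are not real, yet $\Phi(\mb{C}_{\mr{des}}^{(1)})=\tfrac{3}{2}\beta^{(1)}\mb{I}_2$: the scheme decouples even though your hypothesis fails. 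What is true --- and all you need --- is the weaker property $c_{12}=c_{21}^{*}$, $\imag c_{11}=\imag c_{22}$, and $c_{11}+c_{22}^{*}>0$; the clean way to obtain it is the paper's route of working \emph{after} the combining, where the desired channel is the Gram matrix $\mb{E}^{(k)}\tilde{\mb{B}}^{(\underline k)}\tilde{\mb{H}}^{(k)}=\beta^{(k)}(\tilde{\mb{B}}^{(\underline k)}\tilde{\mb{H}}^{(k)})^{*}(\tilde{\mb{B}}^{(\underline k)}\tilde{\mb{H}}^{(k)})=\tfrac{\beta^{(k)}}{2}\|\tilde{\mb{B}}^{(\underline k)}\tilde{\mb{H}}^{(k)}\|^{2}\mb{I}_2$, manifestly a positive multiple of $\mb{I}_2$ by Alamouti orthogonality of its blocks. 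Finally, your fallback --- that Proposition \ref{prop-1} ``already guarantees interference-free reception, so the identity must hold'' --- is circular: Proposition \ref{prop-1} \emph{assumes} the ZF relationship \eqref{eq-ZF} among the chosen filters, and establishing that this downlink construction satisfies exactly that relationship is the content of Proposition \ref{prop1}.
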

\begin{proof}
See Appendix \ref{app-1}.
\end{proof}}
We emphasize that \eqref{eq-s1decoup} and \eqref{eq-s2decoup} are
similar to \eqref{eq-step1} and \eqref{eq-step2} without noise
normalization. This is because the original systems of both the dual
Alamouti codes and the downlink IC scheme have Alamouti codes at the
transmitter.

From the proof of Proposition \ref{prop1}, we can rewrite the
equivalent system equation at User $k$ as
\begin{align}\label{eq-systemeq}
\left[\begin{array}{c}\bar{r}_1^{(k)}\\\bar{r}_{2}^{(k)}\end{array}\right]^\t=\sqrt{\frac{P}{2}}\frac{\beta^{(k)}}{2}\left[\begin{array}{c}s_1^{(k)}\\
s_2^{(k)}\end{array}\right]^\t\|\tilde{\mb{B}}^{(\underline
k)}\tilde{\mb{H}}^{(k)}\|^2+\mb{w}^{(k)},
\end{align}
where the notations of $\beta^{(k)}$,
$\tilde{\mb{B}}^{(\underline k)}$, $\tilde{\mb{H}}^{(k)}$, and
$\mb{w}^{(k)}$ can be found in the appendices in
\eqref{eq-beta}, \eqref{eq-icmatrix}, \eqref{eq-result}, and
\eqref{eq-noise}, respectively. In what follows, we discuss
the decoding of $s_1^{(k)}$ and $s_2^{(k)}$. From
\eqref{eq-noise}, it can be verified that the components in
the equivalent $1\times 2$ noise vector ${\mb{w}}^{(k)}$ are
i.~i.~d.~$\mc{CN}(0,2)$ distributed. The ML decoding for
symbol $s_l^{(k)}$ can be conducted as
\begin{align}
\underset{s^{(k)}_l}{\min}\left\|\bar{r}_l^{(k)}-\sqrt{\frac{P}{2}}\frac{\beta^{(k)}}{2}\|\tilde{\mb{B}}^{(\underline
k)}\tilde{\mb{H}}^{(k)}\|^2s^{(k)}_l\right\|=\underset{s^{(k)}_l}{\max}2\real
\{\bar{r}_l^{(k)*}s^{(k)}_l\}-\sqrt{\frac{P}{2}}\frac{\beta^{(k)}}{2}\|\tilde{\mb{B}}^{(\underline
k)}\tilde{\mb{H}}^{(k)}\|^2\left|s^{(k)}_l\right|^2.
\end{align}

When $s_l^{(k)}$ is drawn from a PSK constellation, the second
term is constant for different points in the constellation.
The ML decoding can be simplified as
$\underset{s^{(k)}_l}{\max}\real
\{\bar{r}_l^{(k)*}s^{(k)}_l\}$. Thus, it can be performed
without knowing CSI. For other constellations, a
decision-feedback method can be used to estimate the
coefficient
$\sqrt{\frac{P}{2}}\frac{\beta^{(k)}}{2}\|\tilde{\mb{B}}^{(\underline
k)}\tilde{\mb{H}}^{(k)}\|^2$ from the previously decoded
symbol. The details of a similar estimation can be found in
\cite{DualCode}.

\subsection{Discussion}\label{subsec-dis}
In this subsection, we discuss power allocation and performance
analysis in terms of the diversity gain and symbol rate.

First, we study power allocation. The transmitted matrix in
\eqref{eq-tranX0} assumes equal power allocation between two users.
With CSIT, we can further allocate power between two users to
compensate the channels in deep fading. We can rewrite
\eqref{eq-tranX0} as
\begin{align}\label{eq-tranX}
\mb{X}=\sqrt{\frac{P}{2}}\left(c_1\tilde{\mb{X}}^{(1)}\bar{\mb{B}}^{(2)}+c_2\tilde{\mb{X}}^{(2)}\bar{\mb{B}}^{(1)}\right),
\end{align}
where $c_k$ denotes the power allocation coefficient of User $k$. To
satisfy the sum power constraint
$\tr\left(\mb{X}^*\mb{X}\right)=2P$, we have $c_1^2+c_2^2=2$. From
\eqref{eq-systemeq}, we can calculate the output SNR for each symbol
as
\begin{align}\label{eq-SNR}
\mr{SNR}^{(k)}=\left(\frac{\sqrt{P}}{4}\beta^{(k)}c_k\left\|\tilde{\mb{B}}^{(\underline
k)}\tilde{\mb{H}}^{(k)}\right\|\right)^2=\frac{Pc_k^2}{8}\underset{b_k}{\underbrace{\tr\left(\tilde{\mb{H}}^{(k)*}\tilde{\mb{B}}^{(\underline
k)*}\left(\tilde{\mb{B}}^{(\underline{k})}\tilde{\mb{B}}^{(\underline
k)*}\right)^{-1}\tilde{\mb{B}}^{(\underline
k)}\tilde{\mb{H}}^{(k)}\right)}}.
\end{align}
To optimize the decoding error probability, the transmitter
distributes the total power to maximize the smaller of
$\mr{SNR}^{(k)}$. The optimization problem can be equivalently
written as
\begin{align}\nonumber
\max_{c_1,c_2}&\ \min\left(\frac{P}{8}c_1^2b_1,
\frac{P}{8}c_2^2b_2\right)\\
\mr{s.t.}&\ c_1^2+c_2^2=2.\label{eq-optimal}
\end{align}
The above problem can be converted to a linear optimization
problem. Using Lagrange multiplier methods, the solution is
$c_k^2=\frac{2b_{\underline{k}}}{b_1+b_2}$. The resulting
output SNR can be calculated as
\begin{align}\mr{SNR}^{(1)}=\mr{SNR}^{(2)}=\frac{Pb_1b_2}{4(b_1+b_2)}.\end{align}

The proposed downlink IC scheme satisfies the short-term power
constraints in \eqref{eq-powercons} and the decoding delay of
two time slots. With perfect CSIT, the short-term behavior of
the decoding error probability for any space-time block coding
system is subject to finite diversity gain\cite{FinDiv}. We
provide the diversity gain analysis in the following theorem.

\begin{theorem}\label{thm}
For a two-user $N\times 2$ BC system with perfect CSIT, no CSIR, and
equal-energy constellations, the downlink IC scheme achieves a
diversity gain of $2(N-1)$ for both the equal power allocation and
the optimal power allocation given in \eqref{eq-optimal}.
\end{theorem}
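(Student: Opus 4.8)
The plan is to convert the diversity computation into a statement about how often the effective channel gain $b_k$ of \eqref{eq-SNR} comes close to zero. I will use the standard fact that if the instantaneous output SNR of an ML-detected equal-energy constellation equals $P\gamma$ for a nonnegative gain $\gamma$ whose distribution satisfies $\Prob[\gamma\le\epsilon]\sim a\,\epsilon^{d}$ as $\epsilon\to 0^+$, then the symbol error probability decays as $P^{-d}$, so the diversity gain is exactly $d$. This reduction is legitimate here precisely because $\mc{S}$ is equal-energy: the term proportional to $|s|^2$ drops out of the ML metric, leaving $\max_s\real(\bar{r}_l^{(k)*}s)$, and the per-symbol gain entering the pairwise error probability is $\mr{SNR}^{(k)}=\tfrac{Pc_k^2}{8}b_k$. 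Everything therefore reduces to the behavior of the law of $b_k$ near the origin for equal power, and of $\tfrac{b_1b_2}{b_1+b_2}$ near the origin for the optimal power allocation.

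The core claim is that $\Prob[b_k\le\epsilon]\sim a\,\epsilon^{2(N-1)}$, i.e. that $b_k$ has the small-value statistics of a chi-squared law with $2(N-1)$ complex degrees of freedom. I would establish this by conditioning on the interfering User $\underline{k}$'s channel. The matrix $\tilde{\mb{B}}^{(\underline k)}$ of \eqref{eq-icmatrix} and the user-separating precoder $\bar{\mb{B}}^{(\underline k)}$ of \eqref{eq-ICpre} are built solely from the coefficients $h_{ij}^{(\underline k)}$, whereas $\tilde{\mb{H}}^{(k)}$ of \eqref{eq-result} is built solely from the independent coefficients $h_{ij}^{(k)}$; hence, collecting User $k$'s channel entries into a vector $\bg_k$ of i.~i.~d.~$\mc{CN}(0,1)$ components, $b_k$ is, conditionally on User $\underline{k}$'s channel, a nonnegative quadratic form in the real and imaginary parts of $\bg_k$ whose coefficients depend only on the interferer and are therefore independent of $\bg_k$. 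Cancelling the two-antenna Alamouti-coded interferer removes its two-dimensional Alamouti subspace from the $2N$-dimensional effective channel; invoking the orthogonality and completeness of the Alamouti design used in \eqref{eq-symsep} and \eqref{eq-ICpre} (each $2\times 2$ Alamouti block has a Gram matrix proportional to $\mb{I}_2$, and sums and products of such blocks remain Alamouti), this quadratic form is almost surely positive on a subspace of $2(N-1)$ complex ($4(N-1)$ real) dimensions and vanishes on the rest. Conditionally, $b_k$ is then a (weighted) sum of $2(N-1)$ independent exponential contributions, whose density behaves as $t^{2(N-1)-1}$ near the origin; averaging over the interferer's channel preserves this exponent provided the accompanying inverse moment is finite, which holds under Rayleigh fading.

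With the law of $b_k$ in hand, the two power-allocation cases are short. For equal power $c_k^2=1$ and $\gamma=b_k/8$, so the reduction of the first paragraph applies verbatim with $d=2(N-1)$. For the optimal allocation of \eqref{eq-optimal} the gain is $\gamma=\tfrac14\,\tfrac{b_1b_2}{b_1+b_2}$, and I would use the elementary bounds $\tfrac12\min(b_1,b_2)\le\tfrac{b_1b_2}{b_1+b_2}\le\min(b_1,b_2)$ to sandwich the outage event as $\{b_1\le 4\epsilon\}\subseteq\{\gamma\le\epsilon\}\subseteq\{\min(b_1,b_2)\le 8\epsilon\}$. The left inclusion gives $\Prob[\gamma\le\epsilon]\ge\Prob[b_1\le 4\epsilon]\sim c\,\epsilon^{2(N-1)}$, hence a diversity at most $2(N-1)$; the right inclusion, together with the union bound $\Prob[\min(b_1,b_2)\le 8\epsilon]\le\Prob[b_1\le 8\epsilon]+\Prob[b_2\le 8\epsilon]\sim C\,\epsilon^{2(N-1)}$, gives a diversity at least $2(N-1)$. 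Only the identical marginal laws of $b_1$ and $b_2$ are needed here, not their independence, which is convenient since the two gains share both users' channels. This pins the diversity of the optimally loaded scheme at $2(N-1)$ as well.

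The main obstacle is the dimension count in the second paragraph. One must show that, for almost every Rayleigh realization, cancelling the two-antenna Alamouti-coded interferer leaves the quadratic form positive on exactly $2(N-1)$ complex dimensions, with none of its surviving directions degenerating on a set of positive measure. This is delicate because the two Alamouti columns of $\tilde{\mb{H}}^{(k)}$ are conjugate-coupled images of the same $2N$ coefficients, so a naive count would suggest $4(N-1)$ degrees of freedom; it is precisely the orthogonal-design identities (Gram matrix proportional to $\mb{I}_2$, closure of Alamouti blocks under sums and products) that collapse this to $2(N-1)$ and simultaneously keep the post-cancellation structure nondegenerate, so the associated inverse moment is finite and the passage from the conditional to the unconditional law keeps the exponent at $2(N-1)$. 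Once this count is established, the remaining steps are the routine chi-squared tail estimate and the elementary sandwich above.
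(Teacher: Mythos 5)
Your outer architecture is the same as the paper's: for equal-energy constellations, reduce the diversity claim to the small-$\epsilon$ exponent of $\Prob[b_k\le\epsilon]$ (the paper cites \cite{divthm_report} for exactly this reduction), and handle the optimal power allocation by sandwiching $\frac{b_1b_2}{b_1+b_2}$ between constant multiples of $\min(b_1,b_2)$. Your sandwich paragraph is correct and is essentially the paper's own argument (the paper gets $\frac{P}{8}\min(b_1,b_2)\le \mr{SNR}^{(k)}\le\frac{P}{4}\min(b_1,b_2)$; your explicit union bound and your remark that independence of $b_1,b_2$ is not needed are both sound refinements). The problem is the core step. The paper never computes the law of $b_k$: it observes that $b_k$ coincides with the instantaneous normalized receive SNR of the uplink IC scheme and imports the uplink diversity result of \cite{KaJa-2}. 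You, writing blind, instead try to derive $\Prob[b_k\le\epsilon]\sim a\,\epsilon^{2(N-1)}$ from scratch — a legitimately different and more self-contained route — but your derivation stops exactly where the content is: you yourself label the rank count ``the main obstacle,'' assert that orthogonal-design identities ``collapse'' the naive count to $2(N-1)$ while keeping the form nondegenerate, and declare the rest routine. That collapse is not a formality; it decides the answer. If the two columns of $\tilde{\mb{H}}^{(k)}$ contributed independent degrees of freedom, the exponent would be $4(N-1)$, not $2(N-1)$, so the unproven step is precisely what separates the correct theorem from a false one. As written, the proposal has a genuine gap at its center.

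The gap is fillable, and more cleanly than your ``weighted exponentials plus finite inverse moment'' sketch suggests. The matrix $\mb{P}=\tilde{\mb{B}}^{(\underline k)*}\bigl(\tilde{\mb{B}}^{(\underline k)}\tilde{\mb{B}}^{(\underline k)*}\bigr)^{-1}\tilde{\mb{B}}^{(\underline k)}$ appearing in \eqref{eq-SNR} is the orthogonal projection onto the row space of $\tilde{\mb{B}}^{(\underline k)}$, a subspace of $\mathds{C}^{2N}$ of complex dimension $2(N-1)$ determined by User $\underline{k}$'s channel alone, so $b_k=\|\mb{P}\bar{\mb{h}}^{(k)}\|^2+\|\mb{P}\tilde{\mb{h}}^{(k)}\|^2$ with $\tilde{\mb{H}}^{(k)}=[\bar{\mb{h}}^{(k)}\ \tilde{\mb{h}}^{(k)}]$ as in \eqref{eq-result}. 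Define the antiunitary map $\theta(\mb{v})=\mb{J}\overline{\mb{v}}$, where $\overline{\mb{v}}$ is the entrywise conjugate and $\mb{J}$ is block diagonal with $2\times 2$ blocks $\left[\begin{array}{rr}0&-1\\1&0\end{array}\right]$. A direct check from the definitions gives $\tilde{\mb{h}}^{(k)}=\theta(\bar{\mb{h}}^{(k)})$, and because the rows of $\tilde{\mb{B}}^{(\underline k)}$ come in Alamouti pairs, its row space is $\theta$-invariant; hence $\mb{P}$ commutes with $\theta$ and $\|\mb{P}\tilde{\mb{h}}^{(k)}\|=\|\mb{P}\bar{\mb{h}}^{(k)}\|$, so $b_k=2\|\mb{P}\bar{\mb{h}}^{(k)}\|^2$ exactly — this is the rigorous form of your ``collapse.'' Since $\bar{\mb{h}}^{(k)}$ has i.i.d. $\mc{CN}(0,1)$ entries independent of $\mb{P}$, and the projection of an isotropic complex Gaussian onto any $2(N-1)$-dimensional subspace is chi-squared with $2(N-1)$ complex degrees of freedom regardless of the subspace, $b_k$ has this law unconditionally: no weights, no conditioning on the interferer, and no inverse-moment argument are needed. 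With that lemma inserted, your two outer paragraphs do complete a proof that is self-contained where the paper's relies on the external uplink result; without it, the proposal does not prove the theorem.
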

\begin{proof}
See Appendix \ref{app-2}.
\end{proof}
The IC scheme in the two-user $2\times N$ MAC system can achieve
diversity gain $2(N-1)$\cite{KaJa-2}. This theorem says that the
downlink IC scheme can achieve the same diversity gain as the uplink
IC scheme. Intuitively, this can be predicted from Proposition
\ref{prop-1} because these two dual systems have the same
distributions on the instantaneous receive SNRs with equal power
allocation. Since diversity gain depends on the outage probability
of instantaneous receive SNR\cite{divthm_report}, these two systems
achieve the same diversity gains. Further power allocation for the
downlink scheme cannot degrade the resulting diversity gain. Hence,
the downlink scheme with the optimal power allocation can also
achieve a diversity gain of $2(N-1)$. From Theorem \ref{thm}, we can
also observe that the full receive diversity gain of $2$ is
achieved, and the transmit diversity gain is $N-1$. Compared to the
method concatenating STBCs and BD transmission, with the receive
diversity of $2$ and the transmit diversity of $N-2$\cite{Lee10},
our proposed scheme has a higher transmit diversity at the same
transmission rate.

Finally, we end this section with a discussion of the rate.
Each user equivalently receives an Alamouti code in two time
slots. Thus, the symbol rate is $1$ symbol/channel use/user,
and the throughput for the whole network is $2$
symbols/channel use. When the transmitter is equipped with two
antennas, i.e., $N=2$, our proposed scheme achieves the
maximum multiplexing gain. When the transmitter has more than
two antennas, our scheme does not achieve the maximum
multiplexing gain $\min\{N,4\}$, yet still has a rate benefit
compared to the TDMA orthogonal methods.

\section{Numerical Results}\label{sec-numer}
We show the simulated BER performance of the proposed dual Alamouti
codes and the downlink IC scheme, and compare their performance with
related schemes in the literature. Figures in this section have the
average receive SNR, measured in dB, as horizontal axis and BER as
vertical axis. Since noises and fading channels are normalized, the
average receive SNR is identical to the transmit power.

First, we compare the dual Alamouti codes with the original
Alamouti codes, differential transmission\cite{TarJa00}, and
singular value decomposition (SVD) transmission using the
largest eigen-direction. All schemes have a rate of one and
achieve full spatial diversity for point-to-point MIMO
systems. But the channel information requirements are
different. The dual Alamouti codes need CSIT but no CSIR,
while the original Alamouti codes need CSIR but no CSIT. For
differential transmission, neither CSIT nor CSIR is required.
The SVD scheme needs both CSIT and CSIR.
Fig.~\ref{fig-MIMOcomp} shows the simulated BER performance in
a $2\times 2$ MIMO system with QPSK and 8PSK constellations.
It can be observed that the dual Alamouti codes achieve the
same BER as the original Alamouti codes. This can be confirmed
by Proposition \ref{prop-1}. The SVD scheme outperforms the
other three transmission schemes due to its use of both CSIT
and CSIR. The performance gap between the SVD scheme and the
dual codes is approximately $2.5$~dB. Compared to the SVD
scheme, the dual codes trade BER performance for less
resources to learn CSIR.

Next, the proposed downlink IC schemes, with both equal and
optimal power allocation, are compared with the BD methods
using STBCs\cite{ChAnHe04,Lee10} and an opportunistic TDMA
scheme. Note that for a two-user $N\times 2$ BC, the BD method
requires that the transmitter has at least four antennas.
Alamouti codes are used on top of the ZF precoder to improve
the diversity gain and reduce decoding
complexity\cite{ChAnHe04}. Such a BD method achieves 1
symbol/channel use/user and has the same symbol rate as the
downlink IC scheme. Global CSIR is required at each receiver
to decouple its two symbols carried in the Alamouti codes for
each transmission. To compare with a TDMA scheme with CSIT and
no CSIR, we propose an opportunistic TDMA scheme that assigns
orthogonal time slots for each user and schedules the user
with the stronger norm of channel coefficients to transmit
using dual Alamouti codes. The opportunistic TDMA scheme has
averagely only $0.5$ symbol/channel use/user, and a
higher-order constellation is needed to compensate the rate
loss. Also, it requires a longer decoding delay compared to
both concurrent transmission schemes.

Figs.~\ref{fig-BCcompR1} and \ref{fig-BCcompR2} exhibit the
BER performance at rate=$1$ bit/channel user/user and rate=$2$
bits/channel use/user, respectively. In
Fig.~\ref{fig-BCcompR1}, BPSK is used for the downlink IC
schemes and the BD methods, and QPSK for the opportunistic
TDMA scheme; in Fig.~\ref{fig-BCcompR2}, the downlink IC
schemes and the BD methods use QPSK modulations, and the
opportunistic TDMA scheme uses a 16-QAM constellation. We
first compare the downlink IC scheme using equal power
allocation with its optimal power allocation. From both
figures, there is approximately $1$~dB array gain improvement
for $N=3$ and $N=4$. The improvement for $N=2$ is about
$0.5$~dB. The observation is consistent with Theorem \ref{thm}
that power allocation improves the array gain but not the
diversity gain. Next, we compare the downlink IC scheme with
the BD method. From Figs.~\ref{fig-BCcompR1} and
\ref{fig-BCcompR2}, the downlink IC scheme with $N=4$
outperforms the BD method with $N=4$ in the entire simulated
SNR regime. Also with $N=3$, the downlink IC scheme
outperforms the BD method with $N=4$ for SNR $>10$~dB. It can
be observed that the proposed downlink IC scheme achieves a
higher diversity gain compared to the BD method.

Finally, we compare the downlink IC scheme with our
opportunistic TDMA scheme. In Fig.~\ref{fig-BCcompR1}, the
opportunistic TDMA scheme has substantial gain over the
downlink IC scheme: the opportunistic TDMA scheme with $N=2$
can outperform the downlink IC scheme with $N=4$. This is
because the opportunistic TDMA scheme exploits the multiuser
diversity gain of $2N\times 2=4N$. It also explains the gains
in Fig.~\ref{fig-BCcompR2} for SNR$>11$ dB. On the other hand,
it can be observed in Fig.~\ref{fig-BCcompR2} that for $N=3$
and $N=4$, the downlink IC scheme has approximately $1$~dB
gain over the opportunistic TDMA scheme for SNR$<10$ dB. The
improvement is because the downlink IC scheme has a higher
symbol rate compared to
the opportunistic TDMA scheme. 

\section{Conclusions}\label{sec-conclusion}
This paper investigates designs of communication systems with CSIT
and no CSIR. We show such scenarios arise for concurrent
transmissions in BC systems when users do not know the channels of
other users. A duality principle has been proposed for systems with
ZF designs. The duality principle connects the systems that know
CSIR but not CSIT with the systems that know CSIT but not CSIR. We
show an example to construct the dual system of the Alamouti codes,
and propose the \emph{dual Alamouti codes} for point-to-point MIMO
systems. For the two-user downlink MIMO BC, we consider an IC scheme
for the uplink MAC as the original system, and derive its downlink
dual system, called the \emph{downlink IC scheme}. The transmitter
uses CSIT to design linear precoders and each receiver cancels
interference and decouples its own data streams using two linear
operations independent of CSIR. Power allocation between two users
are also discussed. For a two-user $N\times 2$ BC, the downlink IC
schemes achieve a diversity gain of $2(N-1)$ at rate $1$
symbol/channel use/user with both equal and optimal power
allocations. The proposed schemes trade higher diversity for rate
compared to the full-rate BD scheme. Simulation results demonstrate
their superior BER performance over the BD methods concatenated with
STBCs, which require global CSIR at each receiver.

\section*{Appendices}
\subsection{Proof of Proposition \ref{prop1}}\label{app-1}
We prove this proposition for User $1$ only. Since the network
is symmetrical, i.e., the user indices can be exchanged,
similar results can be obtained for User $2$. We first show
that the signals of User $2$ is zero-forced at User $1$. Since
the $2\times 2(N-1)$ matrix $\mb{E}^{(k)}$ has the same
structure as $\mb{F}^{(k)}$, the $2\times 2$ submatrices of
$\mb{E}^{(k)}$ also have Alamouti structures. Then,
$\tilde{\mb{X}}^{(k)}$ in \eqref{eq-sympre} is composed of
$2\times 2$ matrices with Alamouti structures due to the
completeness of Alamouti structures under multiplication. We
can represent $\tilde{\mb{X}}^{(k)}$ by {\small\begin{align}
\tilde{\mb{X}}^{(k)}=\left[\begin{array}{ccccc}\tilde{x}_{11}^{(k)}&\tilde{x}_{21}^{(k)}&\cdots &\tilde{x}_{1(N-1)}^{(k)}&\tilde{x}_{2(N-1)}^{(k)}\\
-\tilde{x}_{21}^{(k)*}&\tilde{x}_{11}^{(k)*}&\cdots&
-\tilde{x}_{2(N-1)}^{(k)*}&\tilde{x}_{1(N-1)}^{(k)*}
\end{array}\right], k \in \{1,2\}.
\end{align}}
The received signals $r_{11}^{(1)}$ and $r_{22}^{(1)}$ can be
expanded as {\setlength{\arraycolsep}{1pt}\small\begin{align*}
&{r}_{11}^{(1)}=\sqrt{\frac{P}{2}}\sum_{k}\underset{\tilde{\mb{x}}^{(k)}}{\underbrace{\left[\begin{array}{c}\tilde{x}_{11}^{(k)}\\\tilde{x}_{21}^{(k)}\\\vdots\\ \tilde{x}_{1(N-1)}^{(k)}\\\tilde{x}_{2(N-1)}^{(k)} \end{array}\right]^\t}}\left[\begin{array}{cccc}\frac{h_{11}^{(\underline{k})*}}{\|{\mb{h}}_1^{(\underline{k})}\|^2}&0&\cdots&-\frac{h_{N1}^{(\underline{k})*}}{\|{\mb{h}}^{(\underline{k})}_N\|^2}\\
\frac{h_{12}^{(\underline{k})*}}{\|{\mb{h}}_1^{(\underline{k})}\|^2}&0&\cdots&-\frac{h_{N2}^{(\underline{k})*}}{\|{\mb{h}}^{(\underline{k})}_N\|^2}\\ \vdots& \ddots&\ddots & \vdots \\ 0& \cdots & \frac{h_{(N-1)1}^{(\underline{k})*}}{\|{\mb{h}}_{N-1}^{(\underline{k})}\|^2}&-\frac{h_{N1}^{(\underline{k})*}}{\|{\mb{h}}^{(\underline{k})}_N\|^2}\\ 0&\cdots & \frac{h_{(N-1)2}^{(\underline{k})*}}{\|{\mb{h}}_{N-1}^{(\underline{k})}\|^2}&-\frac{h_{N2}^{(\underline{k})*}}{\|{\mb{h}}^{(\underline{k})}_N\|^2}\end{array}\right]\left[\begin{array}{c}{h}_{11}^{(1)}\\
h_{21}^{(1)}\\ \vdots\\
h_{N1}^{(1)}\end{array}\right]+w_{11},\\
&{r}_{22}^{(1)*}=\sqrt{\frac{P}{2}}\sum_{k}\tilde{\mb{x}}^{(k)}\left[\begin{array}{cccc}\frac{h_{12}^{(\underline{k})}}{\|{\mb{h}}_1^{(\underline{k})}\|^2}&0&\cdots&-\frac{h_{N2}^{(\underline{k})}}{\|{\mb{h}}^{(\underline{k})}_N\|^2}\\
-\frac{h_{11}^{(\underline{k})}}{\|{\mb{h}}_1^{(\underline{k})}\|^2}&0&\cdots&\frac{h_{N1}^{(\underline{k})}}{\|{\mb{h}}^{(\underline{k})}_N\|^2}\\ \vdots& \ddots&\ddots & \vdots \\ 0&\cdots & \frac{h_{(N-1)2}^{(\underline{k})}}{\|{\mb{h}}_{N-1}^{(\underline{k})}\|^2}&-\frac{h_{N2}^{(\underline{k})}}{\|{\mb{h}}^{(\underline{k})}_N\|^2}\\ 0&\cdots & -\frac{h_{(N-1)1}^{(\underline{k})}}{\|{\mb{h}}_{N-1}^{(\underline{k})}\|^2}&\frac{h_{N1}^{(\underline{k})}}{\|{\mb{h}}^{(\underline{k})}_N\|^2}\end{array}\right]\left[\begin{array}{c}{h}_{12}^{(1)*}\\
h_{22}^{(1)*}\\ \vdots\\ h_{N2}^{(1)*}\end{array}\right]+w_{22}^*.
\end{align*}}
Denote $\bar{\mb{h}}^{(1)}=\left[ h_{11}^{(1)}\ -h_{12}^{(1)*}\
\cdots\ h_{N1}^{(1)}\ -h_{N2}^{(1)*} \right]^\t$ and
{\setlength{\arraycolsep}{2pt}\begin{align}\label{eq-ala2}
\tilde{\mb{H}}_i^{(\underline
k)}=\left[\begin{array}{cc}h_{i1}^{(\underline{k})}& h_{i2}^{(\underline{k})}\\
-h_{i2}^{(\underline{k})*} & h_{i1}^{(\underline
k)*}\end{array}\right],
\end{align}}
which has an Alamouti structure. It follows that
\begin{align}\label{eq-icmatrix}
\bar{r}_1^{(1)}=r_{11}^{(1)}+r_{22}^{(1)*}=\sqrt{\frac{P}{2}}\sum_{k}\tilde{\mb{x}}^{(k)}\tilde{\mb{B}}^{(\underline{k})}
\bar{\mb{h}}^{(1)}+w_{11}^{(1)}+w_{22}^{(1)*},
\end{align}
where the $2(N-1)\times 2N$ matrix
$\tilde{\mb{B}}^{(\underline{k})}$ has the same structure as
the ZF matrix $\bar{\mb{Z}}^{(k)}$ in \eqref{eq-usersep} and
replaces each $2\times 2$ matrix
$\frac{2\tilde{\mb{G}}_i^{(k)*}}{\|\tilde{\mb{G}}_i^{(k)}\|^2}$
with
$\frac{\tilde{\mb{H}}_i^{(\underline{k})}}{\|\mb{h}_{i}^{(\underline
k)}\|^2}$ for $i=1,\ldots,N$. Further let
{\small$\tilde{\mb{h}}^{(1)}=\left[ h_{12}^{(1)}\
h_{11}^{(1)*}\ \cdots\ h_{N2}^{(1)} \ h_{N1}^{(1)*}
\right]^\t$}. Similarly, we have
{\small$\bar{r}_{2}^{(1)}=r_{12}^{(1)}-r_{21}^{(1)*}=\sqrt{\frac{P}{2}}\underset{k}{\sum}\tilde{\mb{x}}^{(k)}\tilde{\mb{B}}^{(\underline
k)}\tilde{\mb{h}}^{(1)}+w_{12}^{(1)}-w_{21}^{(1)*}. $} Combine
these two equations as,
{\small\setlength{\arraycolsep}{1pt}\begin{align}
\left[\begin{array}{cc}\bar{r}_1^{(1)}&\bar{r}_{2}^{(1)}\end{array}\right]&=\sqrt{\frac{P}{2}}\sum_{k}\tilde{\mb{x}}^{(k)}\tilde{\mb{B}}^{(\underline{k})}\underset{\tilde{\mb{H}}^{(1)}}{\underbrace{\left[\bar{\mb{h}}^{(1)}\
\tilde{\mb{h}}^{(1)}\right]}}+\mb{w}^{(1)}\label{eq-result}
=\sqrt{\frac{P}{2}}\tilde{\mb{x}}^{(1)}\tilde{\mb{B}}^{(2)}\tilde{\mb{H}}^{(1)}+\mb{w}^{(1)},
\end{align}}
where $\mb{w}^{(1)}$ denotes the equivalent $1\times 2$ noise
vector {\small\begin{align}\label{eq-noise}
\mb{w}^{(1)}=\left[\begin{array}{cc}w_{11}^{(1)}+w_{22}^{(1)*}&
w_{12}^{(1)}-w_{21}^{(1)*}
\end{array}\right].
\end{align}}
Note that
{\small$\tilde{\mb{H}}^{(1)}=\left[\tilde{\mb{H}}_1^{(1)*}\
\cdots \ \tilde{\mb{H}}_N^{(1)*}\right]^*$}, whose $2\times 2$
submatrice denotes the equivalent Alamouti channel matrix from
Antenna $i$. The signal of User $1$ is cancelled because
$\tilde{\mb{B}}^{(2)}$ zero-forces $\tilde{\mb{H}}^{(1)}$.

In what follows, we show that the symbols $s_1^{(1)}$ and
$s_2^{(1)}$ are decoupled. From the structure of the symbol
separating filter, we have $\tilde{\mb{x}}^{(1)}=\left[s_1^{(1)}\
s_2^{(1)}\right]\mb{E}^{(1)}$. Note that $\mb{E}^{(1)}$ has the same
structure as $\mb{F}^{(1)}$ but replacing $g_{ji}^{(\kappa)}$ with
$h_{ij}^{(\kappa)}$ for $\kappa \in \{1,2\}$. From
\eqref{eq-symsep}, we have
\begin{align}\label{eq-beta}\mb{E}^{(1)}=\beta^{(1)}\left(\tilde{\mb{B}}^{(2)}\tilde{\mb{H}}^{(1)}\right)^*\end{align}
where
$\beta^{(1)}=\frac{\sqrt{2}}{\|\tilde{\mb{H}}^{(1)*}\tilde{\mb{B}}^{(2)*}\tilde{\mb{B}}^{(2)}\|}$.
Eq.~\eqref{eq-result} can be further written as
{\setlength{\arraycolsep}{1pt}\small\begin{align*}
\left[\begin{array}{c}\bar{r}_1^{(1)}\\\bar{r}_{2}^{(1)}\end{array}\right]^\t&
=\sqrt{\frac{P}{2}}\left[\begin{array}{c}s_1^{(1)}\\
s_2^{(1)}\end{array}\right]^\t\beta^{(1)}\left(\tilde{\mb{B}}^{(2)}\tilde{\mb{H}}^{(1)}\right)^*\tilde{\mb{B}}^{(2)}\tilde{\mb{H}}^{(1)}+\mb{w}^{(1)}=\sqrt{\frac{P}{2}}\frac{\beta^{(1)}}{2}\left[\begin{array}{c}s_1^{(1)}\\
s_2^{(1)}\end{array}\right]^\t\|\tilde{\mb{B}}^{(2)}\tilde{\mb{H}}^{(1)}\|^2+\mb{w}^{(1)}.
\end{align*}}
The second equality is achieved because the $2\times 2$
submatrices of $\tilde{\mb{B}}^{(2)}\tilde{\mb{H}}^{(1)}$ have
Alamouti structures.

\subsection{Proof of Theorem \ref{thm}}\label{app-2}
When the system uses an equal-energy constellation, the ML
decoding can be conducted without CSIR. Then, the diversity
gain performance depends only on the output instantaneous
receive SNR. A method analyzing the diversity gain based on
instantaneous receive SNR is presented in
\cite{divthm_report}. The techniques in \cite{divthm_report}
focus on the first-order exponent of the outage probability of
the instantaneous receive SNR. For equal power allocation with
$c_k=1$, the resulting SNR can be rewritten as
$\mr{SNR}^{(k)}=\frac{Pb_k}{8}$ from \eqref{eq-SNR}. Note that
$b_k$ has the same expression as the instantaneous normalized
receive SNR of the uplink IC scheme that achieves a diversity
gain of $2(N-1)$\cite{KaJa-2}. Thus, with equal power
allocation, the downlink IC scheme also achieves a diversity
gain of $2(N-1)$.

When power is distributed to maximize the smaller of the
output SNRs at each user, the resulting SNR can be expressed
as $\mr{SNR}^{(k)}=\frac{Pb_1b_2}{4(b_1+b_2)}$. From the
definition of the optimization problem in \eqref{eq-optimal},
we always have $\frac{Pb_1b_2}{4(b_1+b_2)}\ge
\min\left(\frac{Pb_1}{8}, \frac{Pb_2}{8}\right)$. Further,
because of the inequality $\frac{b_1b_2}{b_1+b_2}\le
\frac{b_1b_2}{b_k}=b_{\underline{k}}$, we can obtain
$\frac{Pb_1b_2}{4(b_1+b_2)}\le \frac{P}{4}\min\left(b_1,
b_2\right)$. It follows that $ \frac{P}{8}\min\left(b_1,
b_2\right)\le \mr{SNR}^{(k)}\le \frac{P}{4}\min\left(b_1,
b_2\right). $ This implies that $\mr{SNR}^{(k)}$ scales with
$P\min\left(b_1, b_2\right)$. Note that both $b_1$ and $b_2$
have diversity gain $2(N-1)$. Then, $\min\left(b_1,
b_2\right)$ also has diversity gain $2(N-1)$. Therefore, the
system with optimal power allocation achieves diversity gain
$2(N-1)$.

\renewcommand{\baselinestretch}{1.1}
\footnotesize
\bibliographystyle{ieeetran}
\bibliography{IEEEabrv,IA}

\newpage
\renewcommand{\baselinestretch}{1.6}

\begin{table}
  \centering
  \caption{Classification of multi-antenna communication systems with respect to the channel information}\label{table}
  \begin{tabular}{|c|c|c|c|}
  \hline
  Category & CSIT & CSIR & References \\
  \hline
  System A & No & No & \cite{Hugh00,HoSw00,TarJa00} \\
  System B & No & Yes & \cite{Alamouti98,tar99} \\
  System C & Yes & Yes & \cite{JSO02,ZhGi02} \\
  System D & Yes & No &  \\
  \hline
\end{tabular}
\end{table}

\begin{figure}
  \centering
  \includegraphics[width=5in]{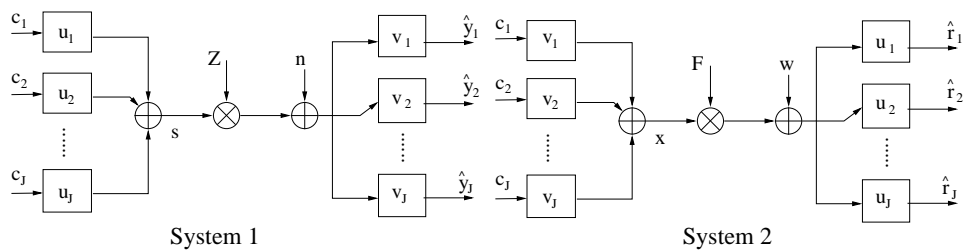}\\
  \caption{System diagram for two real systems. They are dual to each other under ZF designs.}\label{fig-dual}
\end{figure}

\begin{figure}
  \centering
  \includegraphics[width=5in]{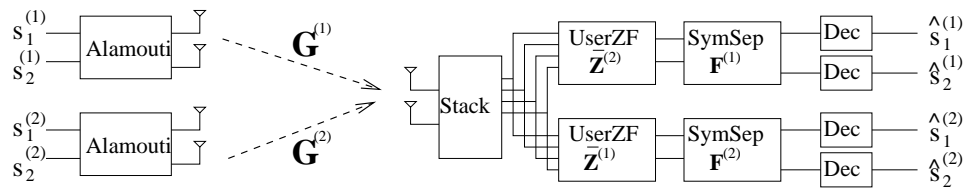}\\
  \caption{Interference cancellation for the two-user uplink MAC.}\label{fig-IC-uplink}
\end{figure}

\begin{figure}
  \centering
  \includegraphics[width=5in]{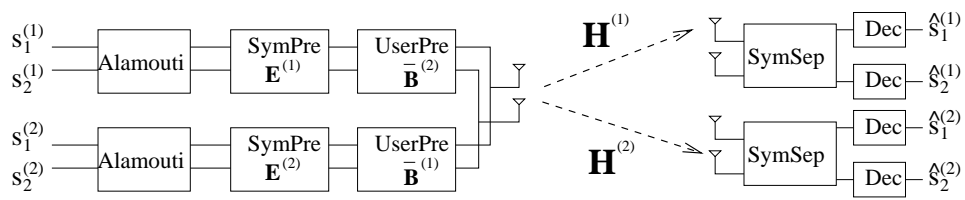}\\
  \caption{Interference cancellation for the two-user downlink BC.}\label{fig-IC-downlink}
\end{figure}

\begin{figure}
  \centering
  \includegraphics[width=4in,angle=-90]{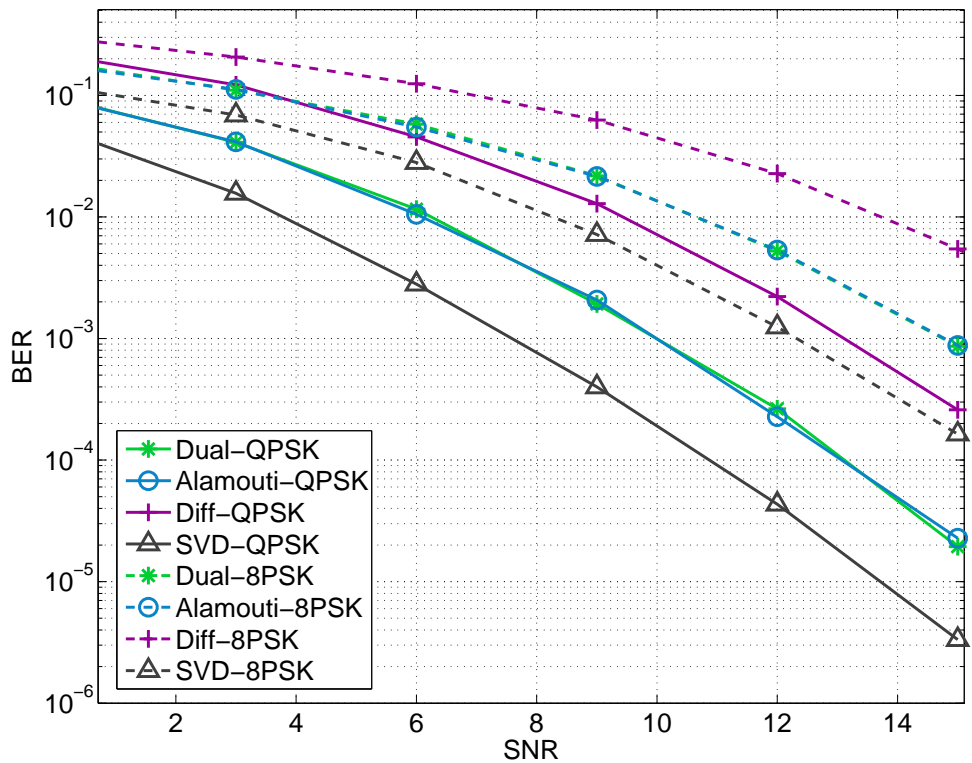}\\
  \caption{Performance comparison in $2\times 2$ MIMO systems: Dual Alamouti codes, Alamouti codes, differential transmission, and SVD methods.}\label{fig-MIMOcomp}
\end{figure}

\begin{figure}
  \centering
  \includegraphics[width=4in,angle=-90]{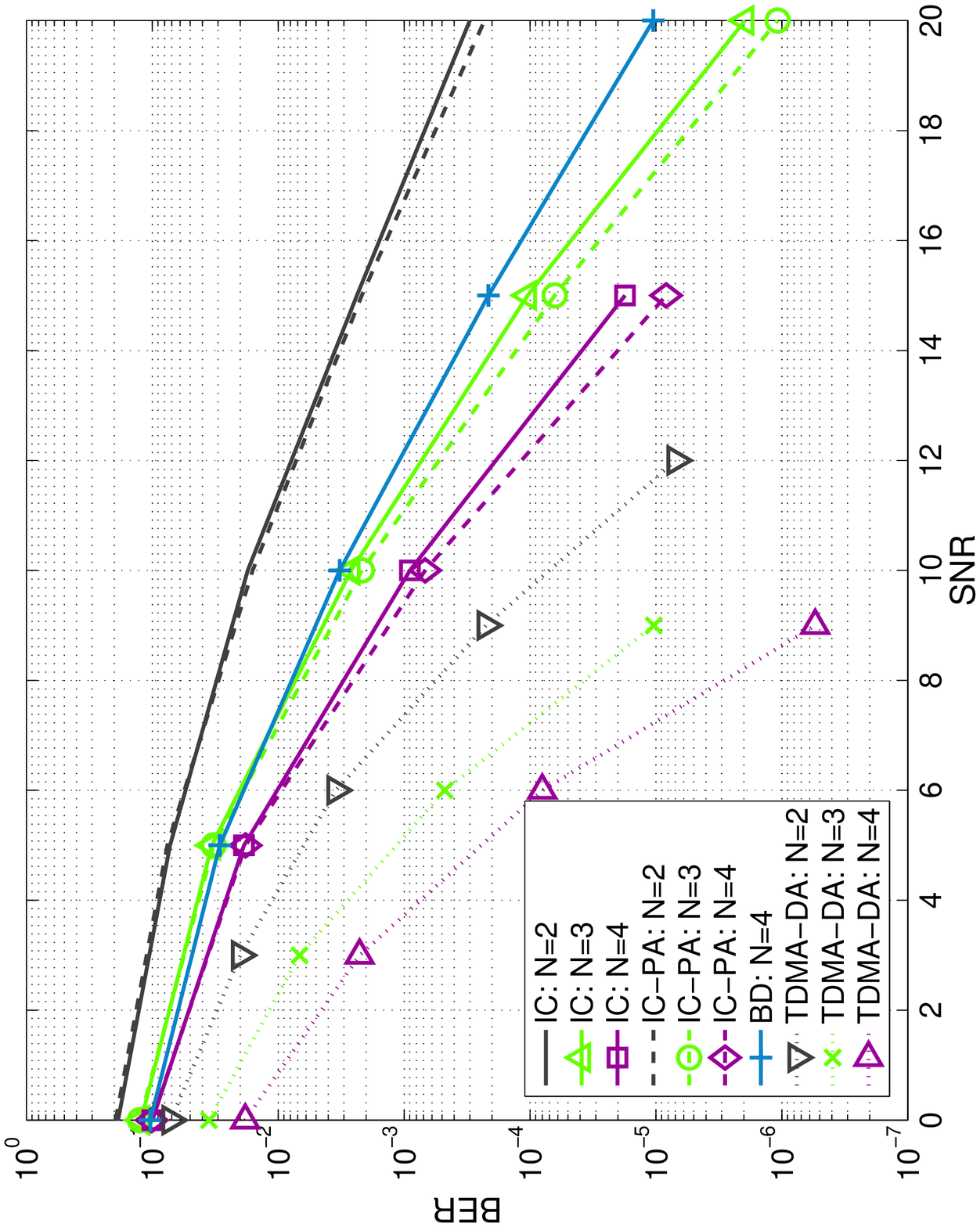}\\
  \caption{Performance comparison in two-user $N\times 2$ BC systems at $1$ bit/channel use/user: Downlink IC scheme (labeled as `IC'), Downlink IC scheme using optimal power allocation (labeled as `IC-PA'), BD methods, and opportunistic TDMA using dual Alamouti codes (labeled as `TDMA-DA').}\label{fig-BCcompR1}
\end{figure}

\begin{figure}
  \centering
  \includegraphics[width=4in,angle=-90]{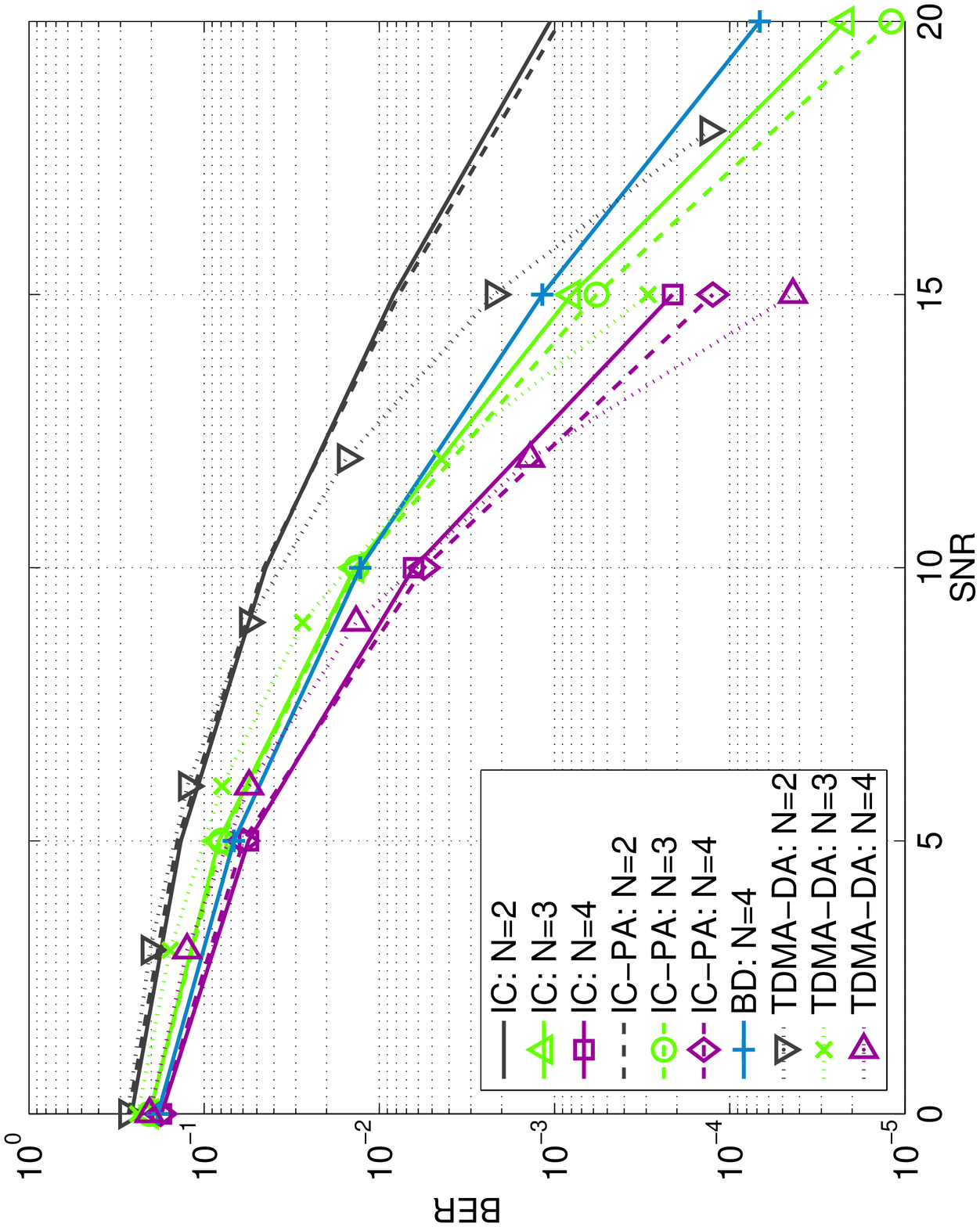}\\
  \caption{Performance comparison in two-user $N\times 2$ MIMO systems at $2$ bits/channel use/user: Downlink IC scheme (labeled as `IC'), Downlink IC scheme using optimal power allocation (labeled as `IC-PA'), BD methods, and opportunistic TDMA using dual Alamouti codes (labeled as `TDMA-DA').}\label{fig-BCcompR2}
\end{figure}

\end{document}